\providecommand{\tabularnewline}{\\}
\newtheorem{theorem}{\textbf{Theorem}}
\newtheorem{corollary}{\textbf{Corollary}}
\newtheorem{lemma}{\textbf{Lemma}}
\newtheorem{proposition}{\textbf{Proposition}}
\begin{document}

\title{Modeling and Analysis of SCMA Enhanced D2D and Cellular Hybrid Network}

\author{{\small{\IEEEauthorblockN{Junyu Liu, Min Sheng, Lei Liu, Yan Shi
and Jiandong Li}}}\\
{\small{\IEEEauthorblockA{State Key Laboratory of Integrated Service
Networks, Xidian University, Xi'an, Shaanxi, 710071, China}}\\
{\small{Email: \{jyliu,leiliu\}@stu.xidian.edu.cn, \{msheng,jdli\}@mail.xidian.edu.cn,
yshi@xidian.edu.cn}}}}
\maketitle
\begin{abstract}
Sparse code multiple access (SCMA) has been recently proposed for
the future wireless networks, which allows non-orthogonal spectrum
resource sharing and enables system overloading. In this paper, we
apply SCMA into device-to-device (D2D) communication and cellular
hybrid network, targeting at using the overload feature of SCMA to
support massive device connectivity and expand network capacity. Particularly,
we develop a stochastic geometry based framework to model and analyze
SCMA, considering underlaid and overlaid mode. Based on the results,
we analytically compare SCMA with orthogonal frequency-division multiple
access (OFDMA) using area spectral efficiency (ASE) and quantify closed-form
ASE gain of SCMA over OFDMA. Notably, it is shown that system ASE
can be significantly improved using SCMA and the ASE gain scales linearly
with the SCMA codeword dimension. Besides, we endow D2D users with
an activated probability to balance cross-tier interference in the
underlaid mode and derive the optimal activated probability. Meanwhile,
we study resource allocation in the overlaid mode and obtain the optimal
codebook allocation rule. It is interestingly found that the optimal
SCMA codebook allocation rule is independent of cellular network parameters
when cellular users are densely deployed. The results are helpful
in the implementation of SCMA in the hybrid system.
\end{abstract}

\section{Introduction\label{sec:Introduction}}

\IEEEPARstart{T}{he} fifth generation (5G) wireless networks are
expected to be a mixture of network architectures with a number of
ambitious goals, including exploding mobile data volume (1000$\times$
improvement over 2010 by 2020 \cite{2011_Cisco}) and massive connected
devices (around 50 billion by 2020 \cite{White_paper_50}), etc. Device-to-device
(D2D) communication and cellular hybrid network is a promising architecture
to achieve these goals \cite{D2D_Magazine_ref}. In particular, due
to the proximity feature of D2D communication devices, spatial resources
can be effectively exploited, thereby improving spectrum utilization
and network capacity. Moreover, since direct local transmission is
enabled to bypass the cellular base station (BS), considerable traffic
can be offloaded from BSs. Hence, huge number of devices can be simultaneously
supported.

The coexistence of cellular network and D2D network can be generally
categorized into underlaid mode and overlaid mode \cite{D2D_survey,D2D_Magazine_ref1,Spectrum_sharing_ref,TCR_ref}.
In the underlaid mode, cellular users and D2D users simultaneously
transmit data over the same resources (e.g., spectrum resources).
In consequence, cross-tier interference exists between cellular users
and D2D users. In order to combat the overwhelming interference, efficient
resource sharing and interference management mechanisms have been
proposed in \cite{SG_Modeling_ref3,D2D_underlaid_ref1} such that
the benefits of proximity transmissions can be fully exploited. In
terms of the overlaid mode, orthogonal resources are partially allocated
to cellular users and D2D users, respectively \cite{D2D_overlaid_ref1,D2D_overlaid_ref2}.
Since only co-tier interference exists, the performance degradation
caused by interference can be more readily handled. Nevertheless,
it is crucial to wisely implement resource allocation in order to
better utilize the available resources. In \cite{D2D_overlaid_ref2},
authors show that, even in the overlaid case, D2D communication can
significantly improve the per-user average rate through carefully
tuning frequency allocation for D2D pairs.

In both coexisting modes, efficient multiple access methods should
be applied in order to support more cellular users and D2D users.
One of the most prevalent methods is orthogonal frequency-division
multiple access (OFDMA) \cite{OFDMA_ref}. Using OFDMA, orthogonal
subcarriers or OFDMA tones%
\footnote{In the following, we use OFDMA tones to replace OFDMA orthogonal resources.%
} are allocated to individual users so as to mitigate the interference
over one tone. In order to improve spectrum reuse, D2D users can be
underlaid with cellular network, i.e., OFDMA tones are reused by cellular
users and D2D users. Accordingly, cross-tier interference is introduced
between cellular users and D2D users, especially when users are densely
distributed. To avoid the mutual interference and improve the quality
of service (QoS) of individual users, it is preferable to use the
overlaid mode, where OFDMA tones are separately allocated to cellular
users and D2D users. However, due to the scarcity of spectrum resources,
limited OFDMA tones are available such that the number of users that
can be simultaneously served is critically restricted. Therefore,
it is challenging to admit more users over the limited spectrum resources.
Recently, sparse code multiple access (SCMA) has been proposed for
5G wireless networks, which has the potential to enable massive connectivity
by allowing non-orthogonal spectrum resource sharing \cite{SCMA_original}.
Since users occupying the same OFDMA resources can be distinguished
using different SCMA constellations, more orthogonal resources, i.e.,
codebooks, can be provided by SCMA compared to OFDMA. As a result,
overloading%
\footnote{Overloading can be achieved when the number of available SCMA layers
is greater than the number of OFDMA tones. For instance, using typical
SCMA setting, 6 SCMA layers can be multiplexed over 4 OFDMA tones.
Therefore, system can be overloaded by admitting more users. As will
be discussed later, each SCMA layer is assigned with one SCMA codebook.%
} gain can be yielded by SCMA. Besides, depending on the resource
allocation rule, the number of users occupying the same codebooks
will be decreased as more orthogonal resources are to be shared. Hence,
the interference over one codebook could be mitigated, thereby boosting
the performance of individual users. For these reasons, it is intuitively
more suitable for SCMA to be applied in the D2D and cellular hybrid
network, especially when communication devices are densely deployed.
However, since SCMA codewords are spread to multiple OFDMA tones \cite{SCMA_codebook_design},
great difficulty has been encountered in analytically characterizing
the interference statistics of cellular users and D2D users. Thus,
the benefits of SCMA in enhancing the hybrid network performance remain
to be explored. Worsestill, as overloading gain can only be achieved
when codebooks are uniquely occupied by individual users, it is doubtful
whether the overloading gain can also be achieved when codebook reuse
is enabled.

Motivated by the above discussion, we consider a D2D and cellular
hybrid network, where SCMA is employed by both uplink cellular transmissions
and D2D transmissions. Using stochastic geometry, we present a tractable
model to evaluate the performance of cellular network and D2D network
by characterizing the signal-to-interference ratio (SIR) statistics
of a typical cellular uplink and a typical D2D link. Moreover, the
potential of SCMA in supporting more active users and enhancing spectrum
utilization has been investigated using area spectral efficiency (ASE).
The main contributions and outcomes of this paper are summarized as
follows:
\begin{itemize}
\item \textbf{Analytical Framework:} To our best knowledge, no previous
work has investigated the performance of SCMA in large scale networks.
To fill in the gap, we have developed an analytical framework for
the design and analysis of SCMA enhanced D2D and cellular hybrid network.
The framework could capture the impact of important parameters on
the performance of the hybrid network, including user spatial distributions,
wireless channel propagation model, D2D mode selection rule and SCMA
parameters, etc.
\item \textbf{SCMA VS OFDMA:} Based on the analytical framework, closed-form
expressions of approximate system ASE are derived in the D2D underlaid/overlaid
cellular network. Furthermore, we compare the performance of SCMA
with OFDMA, in the underlaid mode. In particular, we quantify the
gain of SCMA over OFDMA, termed ASE gain. Through the comparison of
ASE gain with overloading gain, it is numerically shown that the overloading
gain can be almost achieved even when codebook reuse is allowed.
\item \textbf{System Design Guidance:} In the underlaid mode, we enable
D2D transmitters to use an activated probability to control the generated
cross-tier interference to cellular users. The optimal activated probability
is obtained such that proportional fairness utility function is maximized.
In the overlaid mode, we study SCMA codebook allocation for cellular
network and D2D network. Specifically, we find out the optimal codebook
allocation rule when cellular users are densely deployed, in order
to maximize the proportional fairness utility function. The results
can serve as a guideline for the efficient design of SCMA mechanisms
and resource allocation in the D2D and cellular hybrid network.
\end{itemize}

The remainder of this paper is organized as follows. We first present
the system model and performance metrics in Section \ref{sec:System-Model}.
Performance analysis of D2D underlaid cellular network is then presented
in Section \ref{sec:Underlaid D2D}, where comparison between OFDMA
and SCMA is made. In Section \ref{sec:Overlaid D2D}, we evaluate
the performance of SCMA in the D2D overlaid cellular network. Numerical
results are given in Section \ref{sec:Numerical Results} and conclusions
are provided in Section \ref{sec:Conclusion}.

\section{System Model\label{sec:System-Model}}

\begin{figure}[t]
\begin{centering}
\includegraphics[width=3in]{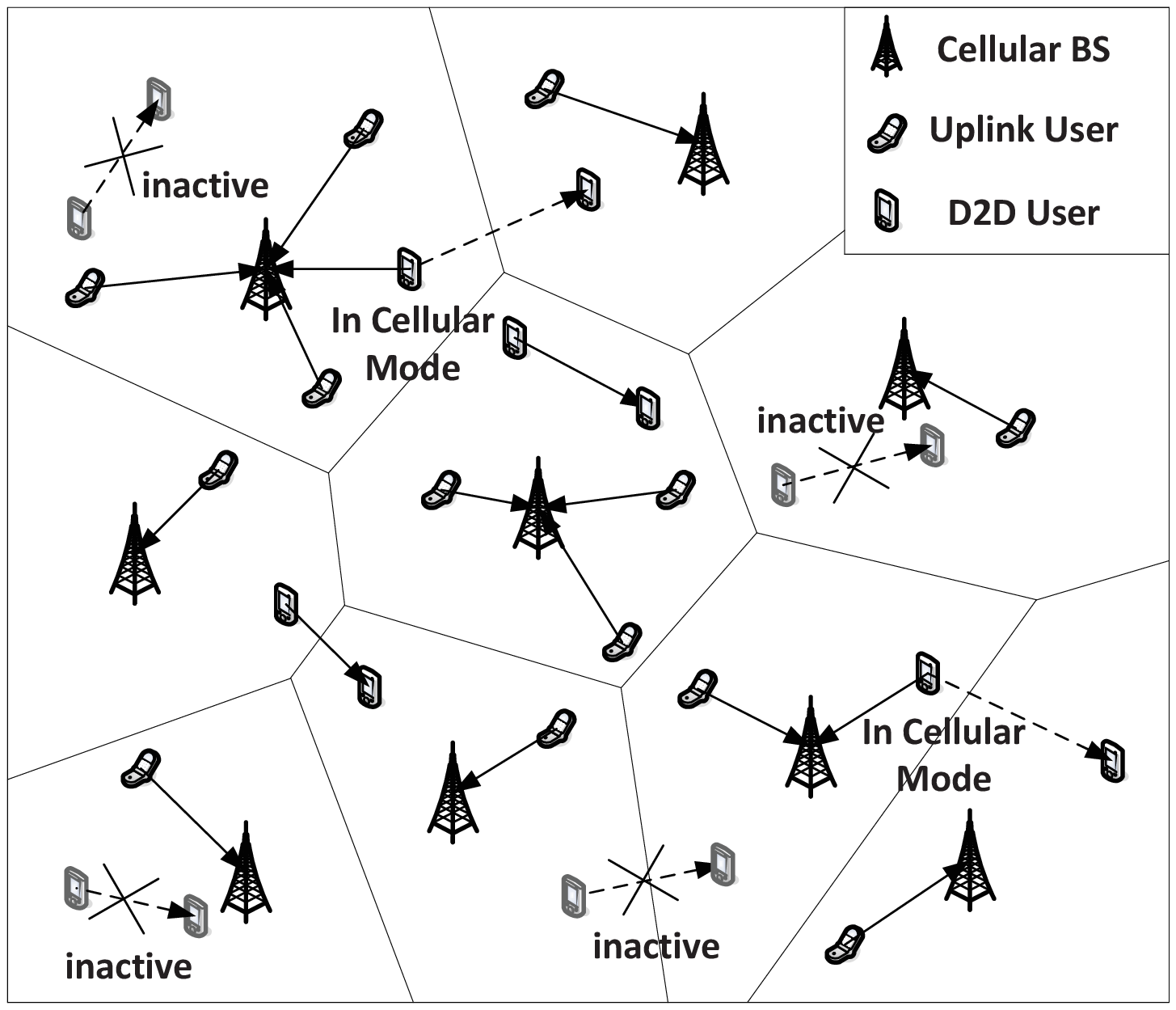}
\par\end{centering}

\caption{\label{fig:scenario}Illustration of a D2D hybrid cellular network.
D2D users select cellular mode for data transmission when potential
D2D link length is too large. Meanwhile, D2D links are activated with
a probability $q_{\mathrm{D}}$ to control interference to cellular
users in the underlaid mode.}

\end{figure}

\subsection{Network Model}

We consider a D2D and cellular hybrid network (see Fig. \ref{fig:scenario}),
consisting of cellular uplinks and unidirectional D2D links. Stochastic
geometry model is used to characterize the locations of BSs and users
in the hybrid network \cite{SG_Modeling_ref1,SG_Modeling_ref2,SG_Modeling_ref3}.
In particular, cellular BSs are assumed to be spatially distributed
over the infinite two-dimensional plane $\mathbb{R}^{2}$, according
to a homogeneous Poisson Point Process (HPPP) $\Pi_{\mathrm{BS}}=\left\{ \mathrm{BS}_{i}\right\} $
$\left(i\in\mathbb{N}\right)$ with intensities $\lambda_{\mathrm{BS}}$.
The locations of uplink cellular users and D2D transmitters are also
modeled using HPPPs $\Pi_{\mathrm{U}}=\left\{ \mathrm{U}_{j}\right\} $
and $\Pi_{\mathrm{D}}=\left\{ \mathrm{DT}_{k}\right\} $ $\left(j,k\in\mathbb{N}\right)$
with intensity $\lambda_{\mathrm{U}}$ and $\lambda_{\mathrm{D}}$,
respectively, and constant transmit power $P_{\mathrm{U}}$ and $P_{\mathrm{D}}$,
respectively. The cellular user $\mathrm{U}_{j}$ is associated with
the geographically closest BS, the distance between which is denoted
by $r_{\mathrm{U},j}$. The D2D transmitter $\mathrm{DT}_{k}$ connects
to an intended D2D receiver $\mathrm{DR}_{k}$ with isotropic direction
at distance $r_{\mathrm{D},k}$ away. Similarly with \cite{D2D_link_ref_1,Spectrum_sharing_ref},
we assume that $r_{\mathrm{D},k}$ follows the Rayleigh distribution
with probability density function (PDF) given by
\begin{equation}
f_{r_{\mathrm{D},k}}\left(x\right)=2\pi\xi x\exp\left(-\xi\pi x^{2}\right),\: x\geq0.\label{eq:D2D link length distribution}
\end{equation}
According to (\ref{eq:D2D link length distribution}), $r_{\mathrm{D},k}$
can be very large, which, however, makes D2D transmission lose its
proximity merit. Therefore, for practical concerns, we introduce a
distance based mode selection rule for D2D users \cite{cellular_D2D_mode_selection,Spectrum_sharing_ref}.
Specifically, $\mathrm{DT}_{k}$ is allowed to use D2D mode only when
$r_{\mathrm{D},k}$ is smaller than a distance threshold $\tau_{\mathrm{dis}}$.
Otherwise, $\mathrm{DT}_{k}$ uses cellular mode with transmit power
$P_{\mathrm{U}}$ by connecting to the nearest BS. Thus, $\tau_{\mathrm{dis}}$
serves as a tunable parameter to control the traffic load from D2D
network to cellular network. According to this rule, a new Poisson
Point Process (PPP) $\Pi_{\mathrm{UT}}=\left\{ \mathrm{U}_{j}\right\} $
is formed by uplink cellular users and cellular mode D2D transmitters
with intensity $\lambda_{\mathrm{UT}}=\lambda_{\mathrm{U}}+\lambda_{\mathrm{D}}\mathbb{P}\left(r_{\mathrm{D},k}>\tau_{\mathrm{dis}}\right)$.
Meanwhile, transmitters selecting D2D mode form a PPP $\Pi_{\mathrm{DT}}=\left\{ \mathrm{DT}_{k}\right\} $
with intensity $\lambda_{\mathrm{DT}}=\lambda_{\mathrm{D}}\mathbb{P}\left(r_{\mathrm{D},k}\leq\tau_{\mathrm{dis}}\right)$.
Besides, we consider that uplink cellular users and D2D transmitters
always have data to transmit.

In terms of the coexistence of cellular network and D2D network, underlaid
mode and overlaid mode are taken into account. In the underlaid mode,
all available resources are universally reused by cellular users and
D2D users. Consequently, cross-tier interference is introduced between
cellular users and D2D users. In contrast, resources are partially
allocated to cellular users and D2D users, respectively, such that
only co-tier interference exists.

We evaluate the performance of a typical cellular uplink and a typical
D2D link. $\mathrm{U}_{0}$ ($\mathrm{DT}_{0}$) and $\mathrm{BS}_{0}$
($\mathrm{DR}_{0}$) are the associated transmitter and receiver of
the typical cellular (D2D) link, respectively. According to Slivnyak's
Theorem \cite{book_stochastic_geometry}, the performance of typical
links can be used to evaluate the performance of the other links.

\subsection{Channel Model\label{sub:Channel-Model}}

Consider that channel gain consists of a path loss component with
the path loss exponent $\alpha$ $\left(\alpha>2\right)$ and a distance-independent
small-scale fading component. Besides, we use independently and identically
distributed (i.i.d.) Rayleigh fading, i.e., $h\sim\mathcal{CN}\left(0,1\right)$,
to model the small-scale fading.

\subsection{Multiple Access Schemes\label{sub:Multiple Access Schemes}}

We consider OFDMA and SCMA as two multiple access schemes for cellular
users and D2D users in the hybrid network.  OFDMA is a typical and
most widely used multiple access scheme in wireless networks, especially
for the uplink transmission in cellular networks. Moreover, SCMA is
designed based on OFDMA. Therefore, it is straightforward and convincing
to take OFDMA as a benchmark to demonstrate the advantage of SCMA.
In order to facilitate the comparison, we only consider the underlaid
case when considering OFDMA. In the following, we describe how to
implement the two schemes.

\textbf{1) OFDMA}. For cellular transmission, each cellular user in
one cell is randomly and independently assigned with one of $K$ OFDMA
tones, which are orthogonal in time and frequency. Meanwhile, OFDMA
tones can be reused by BSs of different cells. In consequence, only
inter-cell interference exists, while no intra-cell interference exists.
According to this allocation rule, when the number of cellular users
within a cell is larger than that of OFDMA tones, randomly selected
$K$ cellular users are activated and the remaining users are kept
inactive due to the unavailability of OFDMA resources. For D2D transmission,
each D2D pair is also randomly and independently allocated with one
OFDMA tone. Since the same OFDMA tones are reused by cellular users
and D2D users, mutual interference is introduced.

We denote $x_{\mathrm{U}_{i}}^{\mathrm{O}}$ and $x_{\mathrm{DT}_{j}}^{\mathrm{O}}$
as the data symbols sent from $\mathrm{U}_{i}$ and $\mathrm{DT}_{j}$,
respectively. Accordingly, the received signal at $\mathrm{BS}_{0}$
can be expressed as
\begin{equation}
y_{\mathrm{BS}_{0}}^{\mathrm{O}}=\sqrt{P_{\mathrm{U}}}h_{\mathrm{BS}_{0},\mathrm{U}_{0}}x_{\mathrm{U}_{0}}^{\mathrm{O}}+\underset{\tiny{\mathrm{U}_{i}\in\tilde{\Pi}_{\mathrm{UT}}^{\mathrm{O}}}}{\sum}\sqrt{P_{\mathrm{U}}}h_{\mathrm{BS}_{0},\mathrm{U}_{i}}x_{\mathrm{U}_{i}}^{\mathrm{O}}+\underset{\tiny{\mathrm{DT}_{j}\in\Pi_{\mathrm{DT}}^{\mathrm{O}}}}{\sum}\sqrt{P_{\mathrm{D}}}h_{\mathrm{BS}_{0},\mathrm{DT}_{j}}x_{\mathrm{DT}_{j}}^{\mathrm{O}}+n_{0},\label{eq:recv signal at BS OFDMA}
\end{equation}
where $\Pi_{\mathrm{DT}}^{\mathrm{O}}$ denotes the set of active
D2D transmitters using the same OFDMA tone with $\mathrm{U}_{0}$.
$\tilde{\Pi}_{\mathrm{UT}}^{\mathrm{O}}=\Pi_{\mathrm{UT}}^{\mathrm{O}}\backslash\left\{ \mathrm{U}_{0}\right\} $,
where $\Pi_{\mathrm{UT}}^{\mathrm{O}}$ is the set of active cellular
users using the same OFDMA tone with $\mathrm{U}_{0}$. In the following,
$h_{\mathrm{Y},\mathrm{X}}$ denotes the channel from $\mathrm{X}$
to $\mathrm{Y}$. $n_{0}$ is the additive Gaussian noise.

Similarly, the received signal at $\mathrm{DR}_{0}$ using OFDMA can
be expressed as
\begin{equation}
y_{\mathrm{DR}_{0}}^{\mathrm{O}}=\sqrt{P_{\mathrm{D}}}h_{\mathrm{DR}_{0},\mathrm{DT}_{0}}x_{\mathrm{DT}_{0}}^{\mathrm{O}}+\underset{\tiny{\mathrm{DT}_{j}\in\tilde{\Pi}_{\mathrm{DT}}^{\mathrm{O}}}}{\sum}\sqrt{P_{\mathrm{D}}}h_{\mathrm{DR}_{0},\mathrm{DT}_{j}}x_{\mathrm{DT}_{j}}^{\mathrm{O}}+\underset{\tiny{\mathrm{U}_{i}\in\Pi_{\mathrm{UT}}^{\mathrm{O}}}}{\sum}\sqrt{P_{\mathrm{U}}}h_{\mathrm{DR}_{0},\mathrm{U}_{i}}x_{\mathrm{U}_{i}}^{\mathrm{O}}+n_{0},\label{eq:recv signal at D2D OFDMA}
\end{equation}
where $\tilde{\Pi}_{\mathrm{DT}}^{\mathrm{O}}=\Pi_{\mathrm{DT}}^{\mathrm{O}}\backslash\left\{ \mathrm{DT}_{0}\right\} $.

\begin{figure}[t]
\begin{centering}
\includegraphics[width=3.5in]{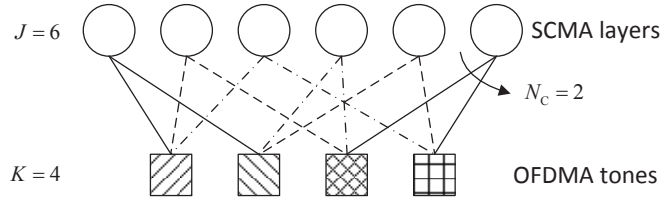}
\par\end{centering}

\caption{\label{fig:Mapping SCMA OFDMA}An exemplary mapping between SCMA layers
and OFDMA tones.}

\end{figure}

\textbf{2) SCMA}. An SCMA encoder is defined as a function, which
maps a binary stream of $\log_{2}\left(M\right)$ bits to a $K$-dimensional
complex codebook of size $M$ \cite{SCMA_original}. The $K$-dimensional
complex codewords of each codebook are sparse vectors with $N_{\mathrm{C}}$
$\left(2\leq N_{\mathrm{C}}<K\right)$ non-zero entries. Using SCMA,
each SCMA layer is assigned with a codebook and users' data is transmitted
using the codeword from the codebook over $K$ OFDMA tones. An exemplary
mapping relationship between SCMA layers and OFDMA tones is illustrated
in Fig. \ref{fig:Mapping SCMA OFDMA}. Therefore, multiple access
is achieved by sharing the same time-frequency resources among the
SCMA layers. According to \cite{SCMA_original}, the maximum number
of codebooks $J$ is a function of the codeword length $K$ and the
number of non-zero elements in the codeword $N_{\mathrm{C}}$. The
generation of codebooks is equivalent to selecting $N_{\mathrm{C}}$
positions out of $K$ elements. Hence, $J=\mathrm{C}_{K}^{N_{\mathrm{C}}}$.
As a result, the received signal after layer multiplexing can be obtained
as $\mathbf{y}=\sum_{j=1}^{J}\sqrt{P_{\mathrm{U}}}\mathrm{diag}\left(\mathbf{h}_{j}\right)\mathbf{x}_{j}+\mathbf{n}_{0},$
where $\mathbf{x}_{j}$ is the $K\times1$ SCMA codeword of the layer
$j$, $\mathbf{h}_{j}$ is the $K\times1$ channel vector of layer
$j$ and $\mathbf{n}_{0}$ is the vector of additive Gaussian noise.
$\mathrm{diag}\left(\cdot\right)$ is defined as the operation, which
turns a vector into a matrix by putting the vector on the main diagonal
of the matrix.

Due to the sparsity of SCMA codewords, multi-user detection based
on message passing algorithm (MPA) can be implemented at the SCMA
receiver with low complexity \cite{MPA_ref}. Using ideal MPA receiver,
codewords from different layers can be decoded without interfering
with each other. Therefore, the codebooks allocated to different layers
can be considered as orthogonal resources. Consequently, interference
occurs only when the same layer or equivalently the same codebook
is reused by more than one user. Since $J$ layers can be multiplexed
over $K$ resources, we further define the SCMA overloading factor
as
\begin{equation}
\eta_{\mathrm{overload}}=\frac{J}{K}.\label{eq:overloading factor}
\end{equation}

For cellular transmission, each BS randomly and independently selects
one of the available codebooks to serve one connected cellular user.
Similarly, if the number of the connected cellular users at a BS is
larger than that of the available codebooks, randomly selected $J$
cellular users are kept active. For D2D transmission, each D2D pair
is randomly and independently allocated with one codebook to transmit
with. 

We consider two coexisting modes, i.e., underlaid mode and overlaid
mode. Let $x_{\mathrm{U}_{i},m}^{\mathrm{S}}$ and $x_{\mathrm{DT}_{i},m}^{\mathrm{S}}$
denote the SCMA codewords sent by $\mathrm{U}_{i}$ and $\mathrm{DT}_{j}$,
respectively, over the $m$th OFDMA tone. Then, the received signal
at $\mathrm{BS}_{0}$ is given by
\begin{equation}
y_{\mathrm{BS}_{0}}^{\mathrm{S}}=\sqrt{P_{\mathrm{U}}^{\dagger}}g_{\mathrm{BS}_{0},\mathrm{U}_{0}}x_{\mathrm{U}_{0},m}^{\mathrm{S}}+\underset{\tiny{\mathrm{U}_{i}\in\tilde{\Pi}_{\mathrm{UT}}^{\mathrm{S}}}}{\sum}\sqrt{P_{\mathrm{U}}^{\dagger}}g_{\mathrm{BS}_{0},\mathrm{U}_{i}}x_{\mathrm{U}_{i},m}^{\mathrm{S}}+\mathbf{1}_{\mathrm{U}}\underset{\tiny{\mathrm{DT}_{j}\in\Pi_{\mathrm{DT}}^{\mathrm{S}}}}{\sum}\sqrt{P_{\mathrm{D}}^{\dagger}}g_{\mathrm{BS}_{0},\mathrm{DT}_{j}}x_{\mathrm{DT}_{j},m}^{\mathrm{S}}+n_{0},\label{eq:recv signal at BS SCMA}
\end{equation}
where $g_{\mathrm{BS}_{0},\mathrm{U}_{i}}=\overset{N_{\mathrm{C}}}{\underset{m=1}{\sum}}h_{\mathrm{BS}_{0},\mathrm{U}_{i},m}$,
$g_{\mathrm{BS}_{0},\mathrm{DT}_{j}}=\overset{N_{\mathrm{C}}}{\underset{m=1}{\sum}}h_{\mathrm{BS}_{0},\mathrm{DT}_{j},m}$
and $h_{\mathrm{Y},\mathrm{X},m}$ denotes the channel from $\mathrm{X}$
to $\mathrm{Y}$ over the $m$th OFDMA tone. $\mathbf{1}_{\mathrm{U}}$
is the indicator function, which equals 1 in the underlaid mode and
0 in the overlaid mode. $P_{\mathrm{U}}^{\dagger}=\frac{P_{\mathrm{U}}}{N_{\mathrm{C}}}$
and $P_{\mathrm{D}}^{\dagger}=\frac{P_{\mathrm{D}}}{N_{\mathrm{C}}}$,
since transmit power is assumed to be uniformly allocated over $N_{\mathrm{C}}$
OFDMA tones. $\Pi_{\mathrm{DT}}^{\mathrm{S}}$ denotes the set of
active D2D transmitters, which use the same codebook with $\mathrm{BS}_{0}$.
$\tilde{\Pi}_{\mathrm{UT}}^{\mathrm{S}}=\Pi_{\mathrm{UT}}^{\mathrm{S}}\backslash\left\{ \mathrm{U}_{0}\right\} $,
where $\Pi_{\mathrm{UT}}^{\mathrm{S}}$ denotes the set of active
cellular users using the same codebook with $\mathrm{U}_{0}$.

Similarly, the received signal at $\mathrm{DR}_{0}$ can be expressed
as
\begin{equation}
y_{\mathrm{DR}_{0}}^{\mathrm{S}}=\sqrt{P_{\mathrm{D}}^{\dagger}}g_{\mathrm{DR}_{0},\mathrm{DT}_{0}}x_{\mathrm{DT}_{0},m}^{\mathrm{S}}+\underset{\tiny{\mathrm{DT}_{j}\in\tilde{\Pi}_{\mathrm{DT}}^{\mathrm{S}}}}{\sum}\sqrt{P_{\mathrm{D}}^{\dagger}}g_{\mathrm{DR}_{0},\mathrm{DT}_{j}}x_{\mathrm{DT}_{j},m}^{\mathrm{S}}+\mathbf{1}_{\mathrm{U}}\underset{\tiny{\mathrm{U}_{i}\in\Pi_{\mathrm{UT}}^{\mathrm{S}}}}{\sum}\sqrt{P_{\mathrm{U}}^{\dagger}}g_{\mathrm{DR}_{0},\mathrm{U}_{i}}x_{\mathrm{U}_{i},m}^{\mathrm{S}}+n_{0},\label{eq:recv signal at D2D SCMA}
\end{equation}
where $g_{\mathrm{DR}_{0},\mathrm{DT}_{j}}=\overset{N_{\mathrm{C}}}{\underset{m=1}{\sum}}h_{\mathrm{DR}_{0},\mathrm{DT}_{j},m}$,
$g_{\mathrm{DR}_{0},\mathrm{U}_{i}}=\overset{N_{\mathrm{C}}}{\underset{m=1}{\sum}}h_{\mathrm{DR}_{0},\mathrm{U}_{i},m}$
and $\tilde{\Pi}_{\mathrm{DT}}^{\mathrm{S}}=\Pi_{\mathrm{\mathrm{DT}}}^{\mathrm{S}}\backslash\left\{ \mathrm{DT}_{0}\right\} $.

\subsection{Performance Metrics\label{sub:Performance Metrics}}

We use ASE $\left[\mathrm{bits}/\left(\mathrm{s\cdot Hz\cdot m^{2}}\right)\right]$
to evaluate the performance of the D2D and cellular hybrid network
\cite{ASE_1}. Specifically, ASE of the cellular network is defined
as 
\begin{equation}
\mathcal{A}_{\mathrm{C}}=q_{\mathrm{U}}\lambda_{\mathrm{UT}}\mathbb{P}\left\{ \mathrm{SIR}_{\mathrm{BS}_{0}}>\tau_{\mathrm{BS}}\right\} \mathrm{log}\left(1+\tau_{\mathrm{BS}}\right),\label{eq:cellular ASE}
\end{equation}
where $q_{\mathrm{U}}$ is the access probability of cellular users,
$\mathrm{SIR}_{\mathrm{BS}_{0}}$ is the SIR at $\mathrm{BS}_{0}$
and $\tau_{\mathrm{BS}}$ is the corresponding SIR threshold. Likewise,
ASE of the D2D network is defined as
\begin{equation}
\mathcal{A}_{\mathrm{D}}=q_{\mathrm{D}}\lambda_{\mathrm{DT}}\mathbb{P}\left\{ \mathrm{SIR}_{\mathrm{DR}_{0}}>\tau_{\mathrm{DR}}\right\} \mathrm{log}\left(1+\tau_{\mathrm{DR}}\right),\label{eq:D2D rate}
\end{equation}
where $q_{\mathrm{D}}$ is the activated probability of D2D transmitters,
$\mathrm{SIR}_{\mathrm{DR}_{0}}$ is the SIR at $\mathrm{DR}_{0}$
and $\tau_{\mathrm{DR}}$ is the corresponding SIR threshold. Note
that the effect of noise is ignored in the interference-limited network.

In the following, $F\left(\cdot\right)$ and $f\left(\cdot\right)$
denote the cumulative distribution function (CDF) and PDF, respectively.
Meanwhile, the notations used throughout this paper are summarized
in Table \ref{tab:Summary-of-Notation}.

\begin{table}[t]
\centering{}\caption{\label{tab:Summary-of-Notation}Summary of Notations}
\begin{tabular}{|c|c|c|c|}
\hline 
Symbol & Meaning & Symbol & Meaning\tabularnewline
\hline 
\hline 
$\mathrm{BS}_{i}$, $\mathrm{U}_{j}$ & $i$th BS, $j$th uplink user & \multirow{2}{*}{$y_{\mathrm{DR}_{0}}^{\mathrm{O}}$, $y_{\mathrm{DR}_{0}}^{\mathrm{S}}$} & received signals at $\mathrm{DR}_{0}$ \tabularnewline
\cline{1-2} 
$\mathrm{DT}_{m}$ & $m$th D2D transmitter &  & using OFDMA and SCMA\tabularnewline
\hline 
$\mathrm{DR}_{n}$ & $n$th D2D receiver & $P_{\mathrm{U}}^{\dagger}$ & $P_{\mathrm{U}}^{\dagger}=\frac{P_{\mathrm{U}}}{N_{\mathrm{C}}}$\tabularnewline
\hline 
$\Pi_{\mathrm{BS}}$ & $\Pi_{\mathrm{BS}}=\left\{ \mathrm{BS}_{i}\right\} $ & \multicolumn{1}{c|}{$P_{\mathrm{D}}^{\dagger}$} & $P_{\mathrm{D}}^{\dagger}=\frac{P_{\mathrm{D}}}{N_{\mathrm{C}}}$\tabularnewline
\hline 
$\Pi_{\mathrm{U}}$ &  uplink cellular users set & $\eta_{\mathrm{P}}$ & $\eta_{\mathrm{P}}=\frac{P_{\mathrm{D}}}{P_{\mathrm{U}}}$\tabularnewline
\hline 
\multirow{2}{*}{$\Pi_{\mathrm{UT}}$} & set of $\Pi_{\mathrm{U}}$ and D2D & \multirow{3}{*}{$\Pi_{\mathrm{UT}}^{\mathrm{O}}$, $\Pi_{\mathrm{DT}}^{\mathrm{O}}$} & sets of active cellular users\tabularnewline
 & users in cellular mode &  & and D2D transmitters\tabularnewline
\cline{1-2} 
$\Pi_{\mathrm{D}}$ & D2D transmitters set &  & over OFDMA tone\tabularnewline
\hline 
\multirow{2}{*}{$\Pi_{\mathrm{DT}}$} & set of D2D users & \multicolumn{1}{c|}{$\tilde{\Pi}_{\mathrm{UT}}^{\mathrm{O}}$} & $\tilde{\Pi}_{\mathrm{UT}}^{\mathrm{O}}=\Pi_{\mathrm{UT}}^{\mathrm{O}}\backslash\left\{ \mathrm{U}_{0}\right\} $\tabularnewline
\cline{3-4} 
 & in D2D mode & $\tilde{\Pi}_{\mathrm{DT}}^{\mathrm{O}}$ & $\tilde{\Pi}_{\mathrm{DT}}^{\mathrm{O}}=\Pi_{\mathrm{DT}}^{\mathrm{O}}\backslash\left\{ \mathrm{DT}_{0}\right\} $\tabularnewline
\hline 
$\lambda_{\mathrm{BS}}$ & intensity of $\Pi_{\mathrm{BS}}$ & \multirow{3}{*}{$\Pi_{\mathrm{UT}}^{\mathrm{S}}$, $\Pi_{\mathrm{DT}}^{\mathrm{S}}$} & set of active cellular users\tabularnewline
\cline{1-2} 
$\lambda_{\mathrm{U}}$, $\lambda_{\mathrm{UT}}$ & intensities of $\Pi_{\mathrm{U}}$ and $\Pi_{\mathrm{UT}}$ &  & and D2D transmitters\tabularnewline
\cline{1-2} 
$\lambda_{\mathrm{D}}$, $\lambda_{\mathrm{DT}}$ & intensities of $\Pi_{\mathrm{D}}$ and $\Pi_{\mathrm{DT}}$ &  & using one SCMA codebook\tabularnewline
\hline 
$P_{\mathrm{U}}$, $P_{\mathrm{D}}$ & transmit power of $\mathrm{U}_{i}$ and $\mathrm{DT}_{m}$ & $\tilde{\Pi}_{\mathrm{UT}}^{\mathrm{S}}$ & $\tilde{\Pi}_{\mathrm{UT}}^{\mathrm{S}}=\Pi_{\mathrm{UT}}^{\mathrm{S}}\backslash\left\{ \mathrm{U}_{0}\right\} $\tabularnewline
\hline 
$r_{\mathrm{U},j}$ & distance from $\mathrm{U}_{j}$ to $\mathrm{BS}_{j}$ & $\tilde{\Pi}_{\mathrm{DT}}^{\mathrm{S}}$ & $\tilde{\Pi}_{\mathrm{DT}}^{\mathrm{S}}=\Pi_{\mathrm{DT}}^{\mathrm{S}}\backslash\left\{ \mathrm{DT}_{0}\right\} $\tabularnewline
\hline 
$r_{\mathrm{D},m}$ & distance from $\mathrm{DT}_{m}$ to $\mathrm{DR}_{m}$ & \multirow{2}{*}{$\tau_{\mathrm{BS}}$, $\tau_{\mathrm{DR}}$} & SIR thresholds at BSs \tabularnewline
\cline{1-2} 
$\tau_{\mathrm{dis}}$ & mode selection threshold &  & and D2D receivers\tabularnewline
\hline 
$\xi$ & D2D link length parameter & $\tilde{\tau}_{\mathrm{BS}}$ & $\tilde{\tau}_{\mathrm{BS}}=\frac{\tau_{\mathrm{BS}}}{N_{\mathrm{C}}}$\tabularnewline
\hline 
$\alpha$, $\delta$ & path loss exponent, $\delta=\frac{2}{\alpha}$ & $\tilde{\tau}_{\mathrm{DR}}$ & $\tilde{\tau}_{\mathrm{DR}}=\frac{\tau_{\mathrm{DR}}}{N_{\mathrm{C}}}$\tabularnewline
\hline 
$h_{\mathrm{Y},\mathrm{X}}$ & Rayleigh fading from $\mathrm{X}$ to $\mathrm{Y}$ & $\eta_{\mathrm{overload}}$ & overloading factor\tabularnewline
\hline 
$h_{\mathrm{Y},\mathrm{X},m}$ & $h_{\mathrm{Y},\mathrm{X}}$ at the $m$th OFDMA tone & $\eta_{\mathrm{ASE}}$ & ASE gain\tabularnewline
\hline 
$K$ & number of OFDMA tones & $\hat{\eta}_{\mathrm{ASE}}$ & ASE gain when $\tau_{\mathrm{BS}}=\tau_{\mathrm{DR}}$\tabularnewline
\hline 
$M$ & codebook size & $q_{\mathrm{U}}$ & cellular access probability\tabularnewline
\hline 
\multirow{2}{*}{$N_{\mathrm{C}}$} & number of non-zero elements  & \multicolumn{1}{c|}{$q_{\mathrm{D}}$} & D2D activated probability\tabularnewline
\cline{3-4} 
 & in SCMA codewords & \multirow{3}{*}{$q_{\mathrm{U}}^{\mathrm{O}}$, $q_{\mathrm{U}}^{\mathrm{S}}$} & access probabilities of\tabularnewline
\cline{1-2} 
\multirow{2}{*}{$y_{\mathrm{BS}_{0}}^{\mathrm{O}}$, $y_{\mathrm{BS}_{0}}^{\mathrm{S}}$} & received signals at $\mathrm{BS}_{0}$  &  & cellular users using \tabularnewline
 & using OFDMA and SCMA &  & OFDMA and SCMA\tabularnewline
\hline 
\end{tabular}
\end{table}

\section{D2D Underlaid Cellular Network\label{sec:Underlaid D2D}}

In this section, we first compare the performance of OFDMA and SCMA
in the underlaid mode through ASE. Afterward, we endow D2D users with
an activated probability to balance the cross-tier interference and
search for the optimal activated probability of D2D users to maximize
the utility function defined based on proportional fairness.

\subsection{OFDMA VS SCMA\label{sub:OFDMA-VS-SCMA}}

In this part, we investigate the performance of OFDMA and SCMA. To
this end, we first determine the access probability of cellular users,
which is defined as the probability that a BS uses available orthogonal
resources (OFDMA tones or SCMA codebooks) to serve its associated
cellular users, according to the following lemma.

\begin{lemma}

\noindent The probability that a BS allocates one orthogonal resource
to one of the associated cellular users is given by
\begin{equation}
q_{\mathrm{U}}=1-\sum_{m=N_{\mathrm{R}}+1}^{\infty}\frac{m-1}{m}\mathbb{P}\left\{ N_{\mathrm{U}}=m\right\} ,\label{eq:cellular access probability}
\end{equation}
where $N_{\mathrm{R}}$ is the number of available resources and $N_{\mathrm{U}}$
is the number of cellular users served by the BS.

\label{lemma: access probability OFDMA}

\end{lemma}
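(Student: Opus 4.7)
My plan is to apply the law of total probability, conditioning on $N_\mathrm{U}$---the number of cellular users associated with the typical BS. The derivation then reduces to a short case analysis driven by the allocation rule, followed by an elementary algebraic rearrangement.

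First, I would analyze the two regimes dictated by the allocation rule of Section~\ref{sub:Multiple Access Schemes}. When $N_\mathrm{U} = m \le N_\mathrm{R}$, the BS has at least as many orthogonal resources as contending users, and a distinct resource can be assigned to every associated user; hence the typical associated user is served with conditional probability $1$. When $N_\mathrm{U} = m > N_\mathrm{R}$, the BS is resource-overloaded, and the uniform random selection step of the allocation rule leaves the typical associated user served with conditional probability $1/m$.

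Second, combining the two cases through the law of total probability yields
\begin{equation*}
q_\mathrm{U} = \sum_{m=0}^{N_\mathrm{R}} \mathbb{P}(N_\mathrm{U} = m) + \sum_{m = N_\mathrm{R}+1}^{\infty} \frac{1}{m}\, \mathbb{P}(N_\mathrm{U} = m),
\end{equation*}
and a one-line rearrangement---adding and subtracting $\sum_{m > N_\mathrm{R}} \mathbb{P}(N_\mathrm{U} = m)$---collapses this into the claimed compact form
\begin{equation*}
q_\mathrm{U} = 1 - \sum_{m = N_\mathrm{R}+1}^{\infty} \frac{m - 1}{m}\, \mathbb{P}(N_\mathrm{U} = m).
\end{equation*}

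The combinatorial side is thus elementary. The main obstacle lies downstream of the lemma itself: to render $q_\mathrm{U}$ computable, one needs the law of $N_\mathrm{U}$, which counts the points of the independent PPP $\Pi_\mathrm{UT}$ falling inside the typical Voronoi cell of $\Pi_\mathrm{BS}$, and this distribution has no closed form. To close the expression I would invoke the standard Gamma approximation of the typical Voronoi cell area (shape parameter $c \approx 3.5$), under which $N_\mathrm{U}$ becomes approximately negative binomial with parameters determined by $\lambda_\mathrm{BS}$ and $\lambda_\mathrm{UT}$; justifying (or at least numerically validating) this approximation is the most delicate ingredient, though it is by now well-accepted in PPP-based cellular analysis.
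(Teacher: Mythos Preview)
The paper does not actually prove this lemma; it simply defers to Appendix~A-I of an external reference. Your total-probability decomposition conditioning on $N_\mathrm{U}$, followed by the negative-binomial approximation for the cell-load distribution, is exactly the natural route and almost certainly mirrors what that reference does.

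That said, your conditional probability in the overloaded regime deserves a second look. According to the allocation rule in Section~\ref{sub:Multiple Access Schemes}, when $m>N_\mathrm{R}$ the BS activates $N_\mathrm{R}$ of its $m$ associated users uniformly at random, so the typical associated user is served with conditional probability $N_\mathrm{R}/m$, not $1/m$. Carrying this through gives
\[
q_\mathrm{U}=1-\sum_{m=N_\mathrm{R}+1}^{\infty}\frac{m-N_\mathrm{R}}{m}\,\mathbb{P}\{N_\mathrm{U}=m\},
\]
which coincides with the stated expression only when $N_\mathrm{R}=1$. A quick consistency check supports $N_\mathrm{R}/m$: the paper uses $q_\mathrm{U}\lambda_\mathrm{UT}$ as the density of active cellular users, and this must saturate at $N_\mathrm{R}\lambda_\mathrm{BS}$ as $\lambda_\mathrm{UT}\to\infty$, whereas the $1/m$ version saturates at $\lambda_\mathrm{BS}$. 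So your overall plan is sound, but the $1/m$ step appears to have been reverse-engineered from the displayed formula rather than derived from the model; either the printed numerator $m-1$ is a typo for $m-N_\mathrm{R}$, or the intended meaning of $q_\mathrm{U}$ differs from ``probability the typical user is scheduled,'' and you should clarify which before committing to the argument.
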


\begin{proof}See Appendix A-I in \cite{Cognitive_D2D_cellular}.\end{proof}

According to \cite{App_using_PPP_1}, the probability mass function
(PMF) of $N_{\mathrm{U}}$ can be obtained as $\mathbb{P}\left\{ N_{\mathrm{U}}=m\right\} =\frac{b^{b}\Gamma(m+b)}{\Gamma\left(b\right)\Gamma(m+1)}\frac{\left(\mathbb{E}\left[N_{\mathrm{U}}\right]\right)^{m}}{\left(b+\mathbb{E}\left[N_{\mathrm{U}}\right]\right)^{m+b}}$,
where $b=3.575$, $\Gamma\left(\cdot\right)$ is the standard gamma
function and $\mathbb{E}\left[N_{\mathrm{U}}\right]=\frac{\lambda_{\mathrm{UT}}}{\lambda_{\mathrm{BS}}}$
is the average number of cellular users connected to one BS.

We substitute $N_{\mathrm{R}}$ with $K$ and $J$, respectively,
in Lemma \ref{lemma: access probability OFDMA} to obtain the access
probabilities $q_{\mathrm{U}}^{\mathrm{O}}$ and $q_{\mathrm{U}}^{\mathrm{S}}$
when OFDMA and SCMA are used. Then, the densities of active cellular
users using the same OFDMA tone and the same SCMA codebook are $\frac{q_{\mathrm{U}}^{\mathrm{O}}\lambda_{\mathrm{UT}}}{K}$
and $\frac{q_{\mathrm{U}}^{\mathrm{S}}\lambda_{\mathrm{UT}}}{J}$,
respectively.

Next, we analyze the ASE of the underlaid scenario when OFDMA is applied.
According to the definitions of ASE in Section \ref{sub:Performance Metrics},
the coverage probabilities $\mathrm{CP}_{\mathrm{BS}}^{\mathrm{OU}}=\mathbb{P}\left\{ \mathrm{SIR}_{\mathrm{BS}_{0}}^{\mathrm{OU}}>\tau_{\mathrm{BS}}\right\} $
and $\mathrm{CP}_{\mathrm{DR}}^{\mathrm{OU}}=\mathbb{P}\left\{ \mathrm{SIR}_{\mathrm{DR}_{0}}^{\mathrm{OU}}>\tau_{\mathrm{DR}}\right\} $
should be calculated in order to derive ASE. Based on (\ref{eq:recv signal at BS OFDMA}),
the SIR at $\mathrm{BS}_{0}$ using OFDMA in the underlaid mode can
be expressed as
\begin{equation}
\mathrm{SIR}_{\mathrm{BS}_{0}}^{\mathrm{OU}}=\frac{P_{\mathrm{U}}r_{\mathrm{U},0}^{-\alpha}\left\Vert h_{\mathrm{BS}_{0},\mathrm{U}_{0}}\right\Vert ^{2}}{I_{\mathrm{C},\mathrm{C}}^{\mathrm{O}}+I_{\mathrm{C},\mathrm{D}}^{\mathrm{O}}},\label{eq:SIR at BS OFDMA}
\end{equation}
where $I_{\mathrm{C},\mathrm{C}}^{\mathrm{O}}=\underset{\tiny{\mathrm{U}_{i}\in\tilde{\Pi}_{\mathrm{UT}}^{\mathrm{O}}}}{\sum}P_{\mathrm{U}}\left\Vert h_{\mathrm{BS}_{0},\mathrm{U}_{i}}\right\Vert ^{2}\left\Vert \mathrm{U}_{i}-\mathrm{BS}_{0}\right\Vert ^{-\alpha}$
is the inter-cell interference from active cellular users using the
same OFDMA tone with $\mathrm{U}_{0}$ outside the coverage of $\mathrm{BS}_{0}$
and $I_{\mathrm{C},\mathrm{D}}^{\mathrm{O}}=\underset{\tiny{\mathrm{DT}_{j}\in\Pi_{\mathrm{DT}}^{\mathrm{O}}}}{\sum}P_{\mathrm{D}}\left\Vert h_{\mathrm{BS}_{0},\mathrm{DT}_{j}}\right\Vert ^{2}\left\Vert \mathrm{DT}_{j}-\mathrm{BS}_{0}\right\Vert ^{-\alpha}$
is the interference from active D2D transmitters using the same OFDMA
tone with $\mathrm{U}_{0}$ .

Similarly, according to (\ref{eq:recv signal at D2D OFDMA}), the
SIR at $\mathrm{DR}_{0}$ using OFDMA in the underlaid mode is given
by
\begin{equation}
\mathrm{SIR}_{\mathrm{DR}_{0}}^{\mathrm{OU}}=\frac{P_{\mathrm{D}}r_{\mathrm{D},0}^{-\alpha}\left\Vert h_{\mathrm{DR}_{0},\mathrm{DT}_{0}}\right\Vert ^{2}}{I_{\mathrm{D},\mathrm{D}}^{\mathrm{O}}+I_{\mathrm{D},\mathrm{C}}^{\mathrm{O}}},\label{eq:SIR at D2D OFDMA}
\end{equation}
where $I_{\mathrm{D},\mathrm{D}}^{\mathrm{O}}=\underset{\tiny{\mathrm{DT}_{j}\in\tilde{\Pi}_{\mathrm{DT}}^{\mathrm{O}}}}{\sum}\frac{P_{\mathrm{D}}\left\Vert h_{\mathrm{DR}_{0},\mathrm{DT}_{j}}\right\Vert ^{2}}{\left\Vert \mathrm{DT}_{j}-\mathrm{DR}_{0}\right\Vert ^{\alpha}}$
is the interference from other D2D transmitters over the same OFDMA
tone with $\mathrm{DR}_{0}$ and $I_{\mathrm{D},\mathrm{C}}^{\mathrm{O}}=\underset{\tiny{\mathrm{U}_{i}\in\Pi_{\mathrm{UT}}^{\mathrm{O}}}}{\sum}\frac{P_{\mathrm{U}}\left\Vert h_{\mathrm{DR}_{0},\mathrm{U}_{i}}\right\Vert ^{2}}{\left\Vert \mathrm{U}_{i}-\mathrm{DR}_{0}\right\Vert ^{\alpha}}$
is the interference from active cellular users over the same OFDMA
tone with $\mathrm{DR}_{0}$.

Based on (\ref{eq:SIR at BS OFDMA}) and (\ref{eq:SIR at D2D OFDMA}),
we derive ASE using the following proposition.

\begin{proposition}

Considering that OFDMA is used in D2D underlaid cellular networks,
the ASE of cellular network is given by $\mathcal{A}_{\mathrm{C}}^{\mathrm{OU}}=q_{\mathrm{U}}^{\mathrm{O}}\lambda_{\mathrm{UT}}\mathrm{CP}_{\mathrm{BS}}^{\mathrm{OU}}\mathrm{log}\left(1+\tau_{\mathrm{BS}}\right)$,
where 
\begin{equation}
\mathrm{CP}_{\mathrm{BS}}^{\mathrm{OU}}=\frac{\lambda_{\mathrm{BS}}}{\lambda_{\mathrm{BS}}+\frac{2q_{\mathrm{U}}^{\mathrm{O}}\lambda_{\mathrm{UT}}\tau_{\mathrm{BS}}}{K\left(\alpha-2\right)}HyF_{1}+\frac{2\pi\lambda_{\mathrm{DT}}\left(\tau_{\mathrm{BS}}\eta_{\mathrm{P}}\right)^{\delta}}{K\alpha\sin\left(\frac{2\pi}{\alpha}\right)}}.\label{eq:CP at BS OFDMA}
\end{equation}
In (\ref{eq:CP at BS OFDMA}), $HyF_{1}={}_{2}F_{1}\left(1,1-\delta,2-\delta,-\tau_{\mathrm{BS}}\right)$,
where $_{2}F_{1}\left(\cdot,\cdot,\cdot,\cdot\right)$ denotes the
hypergeometric function, $\eta_{\mathrm{P}}=\frac{P_{\mathrm{D}}}{P_{\mathrm{U}}}$
and $\delta=\frac{2}{\alpha}$. The D2D network ASE is given by $\mathcal{A}_{\mathrm{D}}^{\mathrm{OU}}=\lambda_{\mathrm{DT}}\mathrm{CP}_{\mathrm{DR}}^{\mathrm{OU}}\mathrm{log}\left(1+\tau_{\mathrm{DR}}\right)$,
where 
\begin{equation}
\mathrm{CP}_{\mathrm{DR}}^{\mathrm{OU}}=\pi\xi\rho_{\mathrm{O}}^{-1}\left(1-e^{-\rho_{\mathrm{O}}\tau_{\mathrm{dis}}^{2}}\right).\label{eq:CP at D2D OFDMA}
\end{equation}
In (\ref{eq:CP at D2D OFDMA}), $\rho_{\mathrm{O}}=\pi\xi+\frac{2\pi^{2}\tau_{\mathrm{DR}}^{\delta}\left(\lambda_{\mathrm{DT}}+q_{\mathrm{U}}^{\mathrm{O}}\lambda_{\mathrm{UT}}\eta_{\mathrm{P}}^{-\delta}\right)}{K\alpha\sin\left(\frac{2\pi}{\alpha}\right)}$.

\label{proposition: OFDMA ASE underlaid}

\end{proposition}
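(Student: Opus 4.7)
The plan is to reduce both ASE statements to the computation of the two coverage probabilities $\mathrm{CP}_{\mathrm{BS}}^{\mathrm{OU}}$ and $\mathrm{CP}_{\mathrm{DR}}^{\mathrm{OU}}$, since the ASE formulas $\mathcal{A}_{\mathrm{C}}^{\mathrm{OU}}$ and $\mathcal{A}_{\mathrm{D}}^{\mathrm{OU}}$ then follow immediately from the definitions in (\ref{eq:cellular ASE})--(\ref{eq:D2D rate}) once the OFDMA access probability $q_{\mathrm{U}}^{\mathrm{O}}$ of Lemma \ref{lemma: access probability OFDMA} is substituted. Both coverage expressions are obtained with the standard Rayleigh-fading Poisson template: condition on the serving link distance, exploit that $\|h\|^{2}\sim\mathrm{Exp}(1)$ so that $\mathbb{P}(\mathrm{SIR}>\tau)=\mathbb{E}[\exp(-sI)]$ with $s$ proportional to (serving distance)$^{\alpha}$, and then use the PGFL of the interferer PPPs and the identity $\int_{0}^{\infty}\frac{u\,du}{1+u^{\alpha/2}/t}=\frac{\pi t^{\delta}}{\alpha\sin(2\pi/\alpha)}$ to turn the exponent into a closed form.

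For $\mathrm{CP}_{\mathrm{BS}}^{\mathrm{OU}}$, I would first factorize the conditional success probability into two independent Laplace transforms corresponding to $I_{\mathrm{C,C}}^{\mathrm{O}}$ and $I_{\mathrm{C,D}}^{\mathrm{O}}$. The D2D-to-BS interferers are an independently thinned PPP of density $\lambda_{\mathrm{DT}}/K$ spread over all of $\mathbb{R}^{2}$, and applying the radial identity above gives the term $\frac{2\pi\lambda_{\mathrm{DT}}(\tau_{\mathrm{BS}}\eta_{\mathrm{P}})^{\delta}}{K\alpha\sin(2\pi/\alpha)}$ in (\ref{eq:CP at BS OFDMA}). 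For the inter-cell uplink interference, co-tone users have intensity $q_{\mathrm{U}}^{\mathrm{O}}\lambda_{\mathrm{UT}}/K$ and, by the nearest-BS association, each interferer is at distance at least $r_{\mathrm{U},0}$ from $\mathrm{BS}_{0}$, so the PGFL integral runs from $r_{\mathrm{U},0}$ to $\infty$ and evaluates to a Gauss hypergeometric function; stripping the $r_{\mathrm{U},0}^{-\alpha\delta}$ factor yields exactly the $HyF_{1}$ coefficient. Averaging finally over $r_{\mathrm{U},0}$ with density $2\pi\lambda_{\mathrm{BS}}r e^{-\pi\lambda_{\mathrm{BS}}r^{2}}$ produces a single Gaussian integral whose value is the $\lambda_{\mathrm{BS}}/(\lambda_{\mathrm{BS}}+\cdots)$ structure of (\ref{eq:CP at BS OFDMA}).

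For $\mathrm{CP}_{\mathrm{DR}}^{\mathrm{OU}}$ the derivation is cleaner because the D2D receiver does not impose a nearest-point exclusion: both $I_{\mathrm{D,D}}^{\mathrm{O}}$ (density $\lambda_{\mathrm{DT}}/K$) and $I_{\mathrm{D,C}}^{\mathrm{O}}$ (density $q_{\mathrm{U}}^{\mathrm{O}}\lambda_{\mathrm{UT}}/K$) are interpreted as homogeneous PPPs over $\mathbb{R}^{2}$, so each Laplace transform produces a factor $\exp(-c_{i}r_{\mathrm{D},0}^{2})$ with constants that sum to $\rho_{\mathrm{O}}-\pi\xi$. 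Averaging $e^{-(\rho_{\mathrm{O}}-\pi\xi)r_{\mathrm{D},0}^{2}}$ against the mode-selection truncated density $f_{r_{\mathrm{D},0}}(x)=2\pi\xi x e^{-\pi\xi x^{2}}\mathbf{1}_{\{x\le\tau_{\mathrm{dis}}\}}$ collapses to the elementary integral $\int_{0}^{\tau_{\mathrm{dis}}}2\pi\xi x e^{-\rho_{\mathrm{O}}x^{2}}dx=\pi\xi\rho_{\mathrm{O}}^{-1}(1-e^{-\rho_{\mathrm{O}}\tau_{\mathrm{dis}}^{2}})$, recovering (\ref{eq:CP at D2D OFDMA}). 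The one technical caveat, and the main obstacle, is justifying the use of an independently thinned PPP for the inter-cell cellular interferers in $I_{\mathrm{C,C}}^{\mathrm{O}}$: formally those users are coupled to $r_{\mathrm{U},0}$ through the nearest-BS rule, but I would adopt the standard and well-validated surrogate of restricting them to lie outside $B(\mathrm{BS}_{0},r_{\mathrm{U},0})$, which is precisely what produces the $HyF_{1}$ form and preserves tractability without materially affecting accuracy.
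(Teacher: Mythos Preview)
Your proposal is correct and follows essentially the same route as the paper: condition on the serving distance, use the exponential distribution of $\|h\|^{2}$ to reduce to a product of Laplace transforms, evaluate each via the PGFL (with the $[r_{\mathrm{U},0},\infty)$ exclusion for co-tone uplink interferers and no exclusion for D2D interferers), and average over the contact distribution of $r_{\mathrm{U},0}$ or the truncated Rayleigh density of $r_{\mathrm{D},0}$. Your explicit flagging of the independent-PPP surrogate for the inter-cell cellular interferers is in fact more careful than the paper, which invokes the same exclusion without comment.
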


\begin{proof}See Appendix \ref{sub:Proof for Prop cellular CP OFDMA}.\end{proof}

Note that it is easy to obtain $\lambda_{\mathrm{UT}}=\lambda_{\mathrm{U}}+\lambda_{\mathrm{D}}e^{-\xi\pi\tau_{\mathrm{dis}}^{2}}$
and $\lambda_{\mathrm{DT}}=\lambda_{\mathrm{D}}\left(1-e^{-\xi\pi\tau_{\mathrm{dis}}^{2}}\right)$
according to the mode selection rule introduced in Section \ref{sec:System-Model}.
We notice from Proposition \ref{proposition: OFDMA ASE underlaid}
that the ASE achieved by cellular networks increases with the intensity
of active cellular users, i.e., $q_{\mathrm{U}}^{\mathrm{O}}\lambda_{\mathrm{UT}}$.
However, given the number of OFDMA tones and intensity of BSs, $q_{\mathrm{U}}^{\mathrm{O}}\lambda_{\mathrm{UT}}$
cannot be further improved when $\lambda_{\mathrm{UT}}$ is sufficiently
large. Consequently, cellular network ASE is limited. In order to
increase cellular network ASE, it is intuitive to increase the intensity
of active cellular users, which can be accomplished using SCMA.

According to (\ref{eq:recv signal at BS SCMA}), when SCMA is applied,
the SIR at $\mathrm{BS}_{0}$ in the underlaid scenario can be expressed
as
\begin{equation}
\mathrm{SIR}_{\mathrm{BS}_{0}}^{\mathrm{SU}}=\frac{\underset{m=1}{\overset{N_{\mathrm{C}}}{\sum}}P_{\mathrm{U}}^{\dagger}r_{\mathrm{U},0}^{-\alpha}\left\Vert h_{\mathrm{BS}_{0},\mathrm{U}_{0},m}\right\Vert ^{2}}{I_{\mathrm{C},\mathrm{C}}^{\mathrm{S}}+I_{\mathrm{C},\mathrm{D}}^{\mathrm{S}}},\label{eq:SIR at BS SCMA}
\end{equation}
where $I_{\mathrm{C},\mathrm{C}}^{\mathrm{S}}=\underset{\tiny{\mathrm{U}_{i}\in\tilde{\Pi}_{\mathrm{UT}}^{\mathrm{S}}}}{\sum}\overset{N_{\mathrm{C}}}{\underset{m=1}{\sum}}\frac{P_{\mathrm{U}}^{\dagger}\left\Vert h_{\mathrm{BS}_{0},\mathrm{U}_{i},m}\right\Vert ^{2}}{\left\Vert \mathrm{U}_{i}-\mathrm{BS}_{0}\right\Vert ^{\alpha}}$
and $I_{\mathrm{C},\mathrm{D}}^{\mathrm{S}}=\underset{\tiny{\mathrm{DT}_{j}\in\Pi_{\mathrm{DT}}^{\mathrm{S}}}}{\sum}\overset{N_{\mathrm{C}}}{\underset{m=1}{\sum}}\frac{P_{\mathrm{D}}^{\dagger}\left\Vert h_{\mathrm{BS}_{0},\mathrm{DT}_{j},m}\right\Vert ^{2}}{\left\Vert \mathrm{DT}_{j}-\mathrm{BS}_{0}\right\Vert ^{\alpha}}$.

Likewise, based on (\ref{eq:recv signal at D2D SCMA}), the SIR at
$\mathrm{DR}_{0}$ is given by
\begin{equation}
\mathrm{SIR}_{\mathrm{DR}_{0}}^{\mathrm{SU}}=\frac{\overset{N_{\mathrm{C}}}{\underset{m=1}{\sum}}P_{\mathrm{D}}^{\dagger}r_{\mathrm{D},0}^{-\alpha}\left\Vert h_{\mathrm{DR}_{0},\mathrm{DT}_{0},m}\right\Vert ^{2}}{I_{\mathrm{D},\mathrm{D}}^{\mathrm{S}}+I_{\mathrm{D},\mathrm{C}}^{\mathrm{S}}},\label{eq:SIR at D2D SCMA}
\end{equation}
where $I_{\mathrm{D},\mathrm{D}}^{\mathrm{S}}=\underset{\tiny{\mathrm{DT}_{j}\in\tilde{\Pi}_{\mathrm{DT}}^{\mathrm{S}}}}{\sum}\overset{N_{\mathrm{C}}}{\underset{m=1}{\sum}}\frac{P_{\mathrm{D}}^{\dagger}\left\Vert h_{\mathrm{DR}_{0},\mathrm{DT}_{j},m}\right\Vert ^{2}}{\left\Vert \mathrm{DT}_{i}-\mathrm{DR}_{0}\right\Vert ^{\alpha}}$
and $I_{\mathrm{D},\mathrm{C}}^{\mathrm{S}}=\underset{\tiny{\mathrm{U}_{i}\in\Pi_{\mathrm{UT}}^{\mathrm{S}}}}{\sum}\overset{N_{\mathrm{C}}}{\underset{m=1}{\sum}}\frac{P_{\mathrm{U}}^{\dagger}\left\Vert h_{\mathrm{DR}_{0},\mathrm{U}_{i},m}\right\Vert ^{2}}{\left\Vert \mathrm{U}_{i}-\mathrm{DR}_{0}\right\Vert ^{\alpha}}$.

In order to derive ASE, we should first calculate the coverage probabilities
at $\mathrm{BS}_{0}$ and $\mathrm{DR}_{0}$. Note from (\ref{eq:SIR at BS SCMA})
and (\ref{eq:SIR at D2D SCMA}) that the SCMA codeword is spread over
more than one OFDMA tone, which brings about difficulty to obtain
the exact results of coverage probabilities in explicit form. Accordingly,
the methods utilized in \cite{Cognitive_D2D_cellular} cannot be applied.
Targeting at providing design insights through the results, we use
approximations to derive the closed-form expressions of $\mathrm{CP}_{\mathrm{BS}}^{\mathrm{SU}}=\mathbb{P}\left\{ \mathrm{SIR}_{\mathrm{BS}_{0}}^{\mathrm{SU}}>\tau_{\mathrm{BS}}\right\} $
and $\mathrm{CP}_{\mathrm{BS}}^{\mathrm{SU}}=\mathbb{P}\left\{ \mathrm{SIR}_{\mathrm{DR}_{0}}^{\mathrm{SU}}>\tau_{\mathrm{DR}}\right\} $
according to the following proposition.

\begin{proposition}

Considering that SCMA is used in D2D underlaid cellular networks,
the ASE achieved by cellular network is given by $\mathcal{A}_{\mathrm{C}}^{\mathrm{SU}}=q_{\mathrm{U}}^{\mathrm{S}}\lambda_{\mathrm{UT}}\mathrm{CP}_{\mathrm{BS}}^{\mathrm{SU}}\mathrm{log}\left(1+\tau_{\mathrm{BS}}\right)$,
where $\mathrm{CP}_{\mathrm{BS}}^{\mathrm{SU}}$ is approximated as
\begin{equation}
\mathrm{CP}_{\mathrm{BS}}^{\mathrm{SU}}\approx\frac{\lambda_{\mathrm{BS}}}{\lambda_{\mathrm{BS}}+\frac{q_{\mathrm{U}}^{\mathrm{S}}\lambda_{\mathrm{UT}}}{J}HyF_{2}+\frac{2\pi\lambda_{\mathrm{DT}}\left(\tilde{\tau}_{\mathrm{BS}}\eta_{\mathrm{P}}\right)^{\delta}}{J\alpha\sin\left(\frac{2\pi}{\alpha}\right)}\underset{_{n=2}}{\overset{_{N_{\mathrm{C}}}}{\prod}}\left(\frac{2}{\left(n-1\right)\alpha}+1\right)}.\label{eq:CP at BS SCMA}
\end{equation}
In (\ref{eq:CP at BS SCMA}), $HyF_{2}={}_{2}F_{1}\left(N_{\mathrm{C}},-\delta,1-\delta,-\tilde{\tau}_{\mathrm{BS}}\right)-1$
and $\tilde{\tau}_{\mathrm{BS}}=\frac{\tau_{\mathrm{BS}}}{N_{\mathrm{C}}}$.
The ASE achieved by D2D network is given by $\mathcal{A}_{\mathrm{D}}^{\mathrm{SU}}=\lambda_{\mathrm{DT}}\mathrm{CP}_{\mathrm{DR}}^{\mathrm{SU}}\mathrm{log}\left(1+\tau_{\mathrm{DR}}\right)$,
where $\mathrm{CP}_{\mathrm{DR}}^{\mathrm{SU}}$ is approximated as
\begin{equation}
\mathrm{CP}_{\mathrm{DR}}^{\mathrm{SU}}\approx\pi\xi\rho_{\mathrm{S}}^{-1}\left(1-e^{-\rho_{\mathrm{S}}\tau_{\mathrm{dis}}^{2}}\right).\label{eq:CP at D2D SCMA}
\end{equation}
In (\ref{eq:CP at D2D SCMA}), $\rho_{\mathrm{S}}=\pi\xi+\frac{2\pi^{2}\tilde{\tau}_{\mathrm{DR}}^{\delta}\left(\lambda_{\mathrm{DT}}+q_{\mathrm{U}}^{\mathrm{S}}\lambda_{\mathrm{UT}}\eta_{\mathrm{P}}^{-\delta}\right)}{J\alpha\sin\left(\frac{2\pi}{\alpha}\right)}\underset{_{n=2}}{\overset{_{N_{\mathrm{C}}}}{\prod}}\left(\frac{2}{\left(n-1\right)\alpha}+1\right)$
and $\tilde{\tau}_{\mathrm{DR}}=\frac{\tau_{\mathrm{DR}}}{N_{\mathrm{C}}}$.

\label{proposition: SCMA ASE underlaid}

\end{proposition}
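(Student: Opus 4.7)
The plan is to derive the two coverage-probability approximations (\ref{eq:CP at BS SCMA}) and (\ref{eq:CP at D2D SCMA}); once these are in hand, the ASE expressions follow immediately from the definitions (\ref{eq:cellular ASE})--(\ref{eq:D2D rate}). The central difficulty in both (\ref{eq:SIR at BS SCMA}) and (\ref{eq:SIR at D2D SCMA}) is that the desired signal is proportional to $G=\sum_{m=1}^{N_{\mathrm{C}}}\|h_{m}\|^{2}\sim\mathrm{Gamma}(N_{\mathrm{C}},1)$ rather than a single exponential, so the usual Rayleigh-signal shortcut $\mathbb{P}(|h|^{2}>x)=e^{-x}$ is unavailable. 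My approximation step is to mean-match the signal by replacing $G$ with an exponential of mean $N_{\mathrm{C}}$, i.e.\ $\mathbb{P}(G>x)\approx e^{-x/N_{\mathrm{C}}}$; this turns the coverage event into a Laplace-transform computation of the aggregate interference at the rescaled threshold $\tilde{\tau}=\tau/N_{\mathrm{C}}$, with the factor $P^{\dagger}=P/N_{\mathrm{C}}$ cancelling to leave the standard form $\tilde{\tau}r^{\alpha}/P^{\dagger}$ in each argument.

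For $\mathrm{CP}_{\mathrm{BS}}^{\mathrm{SU}}$ I would condition on $r_{\mathrm{U},0}$ (density $2\pi\lambda_{\mathrm{BS}}re^{-\pi\lambda_{\mathrm{BS}}r^{2}}$) and compute the two interference Laplace transforms separately. Each interferer's effective fading is again $g\sim\mathrm{Gamma}(N_{\mathrm{C}},1)$ because its SCMA codeword spreads over $N_{\mathrm{C}}$ tones, so the PGFL kernel becomes $(1+sPr^{-\alpha})^{-N_{\mathrm{C}}}$. For the cross-tier term $I_{\mathrm{C,D}}^{\mathrm{S}}$ (intensity $\lambda_{\mathrm{DT}}/J$ on all of $\mathbb{R}^{2}$) the radial integral reduces via the Beta-function identity $\int_{0}^{\infty}(1-(1+v)^{-N_{\mathrm{C}}})v^{-1-\delta}dv=\Gamma(1-\delta)\Gamma(N_{\mathrm{C}}+\delta)/(\delta\Gamma(N_{\mathrm{C}}))$, and the reflection formula $\Gamma(1+\delta)\Gamma(1-\delta)=\pi\delta/\sin(\pi\delta)$ together with the telescoping identity $\Gamma(N_{\mathrm{C}}+\delta)/\Gamma(N_{\mathrm{C}})=\Gamma(1+\delta)\prod_{k=1}^{N_{\mathrm{C}}-1}(1+\delta/k)$ reproduces exactly the D2D coefficient in (\ref{eq:CP at BS SCMA}). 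For the intra-tier term $I_{\mathrm{C,C}}^{\mathrm{S}}$ (intensity $q_{\mathrm{U}}^{\mathrm{S}}\lambda_{\mathrm{UT}}/J$, truncated to $r>r_{\mathrm{U},0}$ by nearest-BS association), the inversion $z=1/y$ with $y=(r/r_{\mathrm{U},0})^{\alpha}$ reduces the integral to $\int_{0}^{1}(1-(1+\tilde{\tau}_{\mathrm{BS}}z)^{-N_{\mathrm{C}}})z^{-1-\delta}dz$; differentiating both sides in $\tilde{\tau}_{\mathrm{BS}}$ and matching via Euler's integral representation of ${}_{2}F_{1}$ (with the boundary value at $\tilde{\tau}_{\mathrm{BS}}=0$ fixing the constant) identifies the integral with $\delta^{-1}({}_{2}F_{1}(N_{\mathrm{C}},-\delta;1-\delta;-\tilde{\tau}_{\mathrm{BS}})-1)$, i.e.\ $HyF_{2}/\delta$. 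Both Laplace exponents are linear in $r_{\mathrm{U},0}^{2}$, so the outer Gaussian integral in $r_{\mathrm{U},0}$ collapses to the ratio $\lambda_{\mathrm{BS}}/(\lambda_{\mathrm{BS}}+\cdots)$.

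For $\mathrm{CP}_{\mathrm{DR}}^{\mathrm{SU}}$ the same procedure applies with two differences. First, the conditioning distance is the truncated-Rayleigh D2D link length $r_{\mathrm{D},0}$ on $[0,\tau_{\mathrm{dis}}]$, inherited from the mode-selection rule. Second, neither the D2D interferers nor the cellular interferers face an exclusion region around $\mathrm{DR}_{0}$, so both Laplace integrals run over all of $\mathbb{R}^{2}$ and contribute the same $\prod_{k=1}^{N_{\mathrm{C}}-1}(1+\delta/k)$ factor via $\mathbb{E}[g^{\delta}]$. Summing these two contributions with intensities $\lambda_{\mathrm{DT}}/J$ and $q_{\mathrm{U}}^{\mathrm{S}}\lambda_{\mathrm{UT}}/J$ and picking up the power ratio $\eta_{\mathrm{P}}^{-\delta}$ for the cross term produces exactly $\rho_{\mathrm{S}}-\pi\xi$. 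The remaining outer integral $\int_{0}^{\tau_{\mathrm{dis}}}2\pi\xi re^{-\rho_{\mathrm{S}}r^{2}}dr$ is elementary and yields the factor $\pi\xi\rho_{\mathrm{S}}^{-1}(1-e^{-\rho_{\mathrm{S}}\tau_{\mathrm{dis}}^{2}})$.

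The only genuinely non-rigorous step---and the principal obstacle---is the Gamma-to-exponential mean-matching on the signal: the exact CCDF $\mathbb{P}(G>x)=\sum_{k=0}^{N_{\mathrm{C}}-1}x^{k}e^{-x}/k!$ would force taking $N_{\mathrm{C}}-1$ derivatives of the interference Laplace transform in $s$, destroying the compact closed form of (\ref{eq:CP at BS SCMA}) and (\ref{eq:CP at D2D SCMA}). The only other delicate bookkeeping is to verify that the intra-cellular exclusion-zone integral lands on the correct ${}_{2}F_{1}$ branch with the ``$-1$'' boundary correction, which the differentiation-in-$\tilde{\tau}_{\mathrm{BS}}$ argument above pins down.
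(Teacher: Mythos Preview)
Your proposal is correct and follows essentially the same route as the paper: mean-match the Gamma signal by an exponential of mean $N_{\mathrm{C}}$, keep the interferers' effective fading as $\mathrm{Gamma}(N_{\mathrm{C}},1)$ so that the PGFL kernel is $(1+sPr^{-\alpha})^{-N_{\mathrm{C}}}$, evaluate the two Laplace transforms with and without the exclusion zone, and then average over the link distance. The only difference is that you spell out the Beta/reflection/telescoping and differentiation-in-$\tilde{\tau}$ arguments that pin down the product $\prod_{n=2}^{N_{\mathrm{C}}}(1+2/((n-1)\alpha))$ and the $HyF_{2}$ term, whereas the paper simply states those evaluated integrals.
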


\begin{proof}See Appendix \ref{sub:Proof for Prop cellular CP SCMA}.\end{proof}

In Appendix \ref{sub:Proof for Prop cellular CP SCMA}, we have used
$W_{\mathrm{BS}_{0},\mathrm{U}_{0}}\sim\exp\left(N_{\mathrm{C}}^{-1}\right)$
to approximate $G_{\mathrm{BS}_{0},\mathrm{U}_{0}}\sim\mathrm{Gamma}\left(N_{\mathrm{C}},1\right)$
for analytical tractability%
\footnote{Note that the subscripts of $W$ and $G$ have the same meaning as
that for $h$.%
}. $W_{\mathrm{BS}_{0},\mathrm{U}_{0}}$ and $G_{\mathrm{BS}_{0},\mathrm{U}_{0}}$
have the same first moment. Meanwhile, it can be readily shown that
higher moments of $W_{\mathrm{BS}_{0},\mathrm{U}_{0}}$ and $G_{\mathrm{BS}_{0},\mathrm{U}_{0}}$
are close especially when $N_{\mathrm{C}}$ is small. Note that the
typical value of $N_{\mathrm{C}}$ equals 2. Besides, the power loss
caused by the small-scale fading is much smaller than that caused
by pathloss. Therefore, using such approximation will not exert much
influence on the accuracy of the coverage probability analysis provided
in Proposition \ref{proposition: SCMA ASE underlaid}, which will
be shown in Section \ref{sec:Numerical Results}.

Comparing the results given by Proposition \ref{proposition: OFDMA ASE underlaid}
and Proposition \ref{proposition: SCMA ASE underlaid}, we derive
the ASE gain of SCMA over OFDMA as
\begin{equation}
\eta_{\mathrm{ASE}}=\frac{\mathcal{A}_{\mathrm{C}}^{\mathrm{SU}}+\mathcal{A}_{\mathrm{D}}^{\mathrm{SU}}}{\mathcal{A}_{\mathrm{C}}^{\mathrm{OU}}+\mathcal{A}_{\mathrm{D}}^{\mathrm{OU}}}.\label{eq:ASE gain}
\end{equation}
By definition, if the same SIR thresholds are applied by cellular
users and D2D users, i.e., $\tau_{\mathrm{BS}}=\tau_{\mathrm{DR}}$,
the ASE gain is equivalent to
\begin{equation}
\hat{\eta}_{\mathrm{ASE}}=\frac{q_{\mathrm{U}}^{\mathrm{S}}\lambda_{\mathrm{UT}}\mathrm{CP}_{\mathrm{BS}}^{\mathrm{SU}}+\lambda_{\mathrm{DT}}\mathrm{CP}_{\mathrm{DR}}^{\mathrm{SU}}}{q_{\mathrm{U}}^{\mathrm{O}}\lambda_{\mathrm{UT}}\mathrm{CP}_{\mathrm{BS}}^{\mathrm{OU}}+\lambda_{\mathrm{DT}}\mathrm{CP}_{\mathrm{DR}}^{\mathrm{OU}}},\label{eq:ASE gain ex}
\end{equation}
where the numerator denotes the intensity of users that are successfully
admitted using SCMA and the denominator denotes the intensity of users
that are successfully admitted using OFDMA. Therefore, we could compare
$\hat{\eta}_{\mathrm{ASE}}$ with $\eta_{\mathrm{overload}}$ defined
in (\ref{eq:overloading factor}) to examine whether overloading gain
can be achieved when codebook reuse is enabled, which will be shown
in Section \ref{sec:Numerical Results}.

\subsection{Optimizing Activated Probability of D2D Users}

In this part, we enable D2D users to use an activated probability
to contend for available resources. In particular, at the beginning
of each time slot, each D2D transmitter randomly and independently
tosses a coin to determine whether to keep active during this time
slot. Then, the activated D2D transmitters use SCMA for data transmission.
Given that the activated probability of each D2D transmitters is $q_{\mathrm{D}}$,
the set of active D2D transmitters is a thinned PPP with intensity
$q_{\mathrm{D}}\lambda_{\mathrm{DT}}$. Hence, $q_{\mathrm{D}}$ is
a tunable parameter to control cross-tier interference from D2D network
to cellular network. Specifically, a large $q_{\mathrm{D}}$ would
enhance D2D network performance by activating more D2D transmissions,
but result in overwhelming cross-tier interference to cellular transmissions,
and vice versa. Meanwhile, D2D transmissions are fully under control
of BSs. Specifically, BSs decide how many D2D users are active through
tuning the activated probability. After the activated probability
is set, this scheme can be independently performed by D2D users without
channel estimation and channel information interaction. Therefore,
this scheme serves as a simple but effective semi-centralized interference
management method. Although suboptimal compared to conventional centralized
scheme, it can provide a performance lower bound to these sophisticated
schemes.

In the following, we intend to balance the tradeoff by optimizing
$q_{\mathrm{D}}$ based on a proportional fairness utility function.
To this end, we first quantify the effect of $q_{\mathrm{D}}$ on
the hybrid system performance according to the following corollary.

\begin{corollary}

\noindent Considering that D2D transmitters are activated with probability
$q_{\mathrm{D}}$ in the underlaid scenario, the ASE of cellular network
using SCMA is given by $\hat{\mathcal{A}}_{\mathrm{C}}^{\mathrm{SU}}=q_{\mathrm{U}}^{\mathrm{S}}\lambda_{\mathrm{UT}}\hat{\mathrm{CP}}_{\mathrm{BS}}^{\mathrm{SU}}\mathrm{log}\left(1+\tau_{\mathrm{BS}}\right)$,
where $\hat{\mathrm{CP}}_{\mathrm{BS}}^{\mathrm{SU}}$ is approximated
as
\begin{equation}
\hat{\mathrm{CP}}_{\mathrm{BS}}^{\mathrm{SU}}\approx\frac{\lambda_{\mathrm{BS}}}{\lambda_{\mathrm{BS}}+\frac{q_{\mathrm{U}}^{\mathrm{S}}\lambda_{\mathrm{UT}}}{J}HyF_{2}+\frac{2\pi q_{\mathrm{D}}\lambda_{\mathrm{DT}}\left(\tilde{\tau}_{\mathrm{BS}}\eta_{\mathrm{P}}\right)^{\delta}}{J\alpha\sin\left(\frac{2\pi}{\alpha}\right)}\underset{_{n=2}}{\overset{_{N_{\mathrm{C}}}}{\prod}}\left(\frac{2}{\left(n-1\right)\alpha}+1\right)}.\label{eq:CP at BS SCMA MAP}
\end{equation}
The ASE of D2D network is given by $\hat{\mathcal{A}}_{\mathrm{D}}^{\mathrm{SU}}=q_{\mathrm{D}}\lambda_{\mathrm{DT}}\hat{\mathrm{CP}}_{\mathrm{DR}}^{\mathrm{SU}}\mathrm{log}\left(1+\tau_{\mathrm{DR}}\right)$,
where $\hat{\mathrm{CP}}_{\mathrm{DR}}^{\mathrm{SU}}$ is approximated
as
\begin{equation}
\hat{\mathrm{CP}}_{\mathrm{DR}}^{\mathrm{SU}}\approx\frac{\pi\xi}{\rho_{\mathrm{S}}^{\dagger}}\left(1-e^{-\rho_{\mathrm{S}}^{\dagger}\tau_{\mathrm{dis}}^{2}}\right).\label{eq:CP at D2D SCMA MAP}
\end{equation}
In (\ref{eq:CP at D2D SCMA MAP}), $\rho_{\mathrm{S}}^{\dagger}=\pi\xi+\frac{2\pi^{2}\tilde{\tau}_{\mathrm{DR}}^{\delta}\left(q_{\mathrm{D}}\lambda_{\mathrm{DT}}+q_{\mathrm{U}}^{\mathrm{S}}\lambda_{\mathrm{UT}}\eta_{\mathrm{P}}^{-\delta}\right)}{J\alpha\sin\left(\frac{2\pi}{\alpha}\right)}\underset{_{n=2}}{\overset{_{N_{\mathrm{C}}}}{\prod}}\left(\frac{2}{\left(n-1\right)\alpha}+1\right)$.

\label{corollary: ASE using D2D MAP}

\end{corollary}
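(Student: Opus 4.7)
The plan is to treat this corollary as a direct extension of Proposition \ref{proposition: SCMA ASE underlaid}, since introducing an independent activated probability $q_{\mathrm{D}}$ for each D2D transmitter merely thins the underlying PPP. By the independent thinning theorem for Poisson point processes, the set of actively transmitting D2D users using the same SCMA codebook as the typical link remains a homogeneous PPP, but with intensity reduced from $\lambda_{\mathrm{DT}}/J$ to $q_{\mathrm{D}}\lambda_{\mathrm{DT}}/J$. All other ingredients of the SIR expressions in (\ref{eq:SIR at BS SCMA}) and (\ref{eq:SIR at D2D SCMA}) are unaffected: the cellular user process, the access probability $q_{\mathrm{U}}^{\mathrm{S}}$, the BS process, the path-loss and fading model, and the typical-link geometry are all untouched.

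I would then retrace the steps of the proof of Proposition \ref{proposition: SCMA ASE underlaid} (Appendix \ref{sub:Proof for Prop cellular CP SCMA}) verbatim, except at the point where the Laplace transform of the cross-tier interference $I_{\mathrm{C,D}}^{\mathrm{S}}$ at $\mathrm{BS}_{0}$ is evaluated. In that step, the Campbell/PGFL computation over the PPP of interfering D2D transmitters now carries the intensity $q_{\mathrm{D}}\lambda_{\mathrm{DT}}/J$, which pulls a multiplicative $q_{\mathrm{D}}$ through the integral and produces precisely the extra $q_{\mathrm{D}}$ in front of the D2D-interference term in the denominator of (\ref{eq:CP at BS SCMA MAP}). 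The same substitution in the Laplace transform of $I_{\mathrm{D,D}}^{\mathrm{S}}$ for the typical D2D receiver yields $\rho_{\mathrm{S}}^{\dagger}$ with the $q_{\mathrm{D}}\lambda_{\mathrm{DT}}$ term, and the Rayleigh-distributed link length $r_{\mathrm{D},0}$ conditioned on the D2D-mode selection event $r_{\mathrm{D},0}\leq \tau_{\mathrm{dis}}$ is integrated exactly as before to give the expression in (\ref{eq:CP at D2D SCMA MAP}). The same Gamma-to-exponential approximation $W\sim\exp(N_{\mathrm{C}}^{-1})$ used in Proposition \ref{proposition: SCMA ASE underlaid} is carried over unchanged, so no new approximation error is incurred.

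Finally, for the ASE expressions themselves, I would apply the definitions (\ref{eq:cellular ASE}) and (\ref{eq:D2D rate}). The cellular ASE retains the prefactor $q_{\mathrm{U}}^{\mathrm{S}}\lambda_{\mathrm{UT}}$ because cellular transmissions are unaffected by $q_{\mathrm{D}}$; the D2D ASE, however, picks up the additional factor $q_{\mathrm{D}}$ in front of $\lambda_{\mathrm{DT}}$, reflecting that only a fraction $q_{\mathrm{D}}$ of the D2D transmitters per unit area are actually attempting transmission. Combining these with the coverage probabilities above gives the claimed $\hat{\mathcal{A}}_{\mathrm{C}}^{\mathrm{SU}}$ and $\hat{\mathcal{A}}_{\mathrm{D}}^{\mathrm{SU}}$.

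Since every step is a mechanical re-run of the Proposition \ref{proposition: SCMA ASE underlaid} argument with one intensity substituted, I do not anticipate any genuine obstacle; the only care needed is to make sure the $q_{\mathrm{D}}$ factor appears in the interference Laplace transforms (via thinning) and as a multiplicative prefactor in $\hat{\mathcal{A}}_{\mathrm{D}}^{\mathrm{SU}}$ (via the definition of ASE), but \emph{not} in front of the cellular user intensity $\lambda_{\mathrm{UT}}$ nor inside $q_{\mathrm{U}}^{\mathrm{S}}$, which remain determined solely by the cellular tier.
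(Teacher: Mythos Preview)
Your proposal is correct and matches the paper's own approach exactly: the paper simply states that the proof is similar to that of Proposition \ref{proposition: SCMA ASE underlaid} (Appendix \ref{sub:Proof for Prop cellular CP SCMA}) and omits the details. Your explicit identification of the independent thinning of the D2D PPP as the only modification, together with the correct bookkeeping of where $q_{\mathrm{D}}$ appears (in the interference Laplace transforms and as a prefactor in $\hat{\mathcal{A}}_{\mathrm{D}}^{\mathrm{SU}}$, but not in the cellular-tier quantities), is precisely the intended argument.
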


\begin{proof}The proof is similar to that in Appendix \ref{sub:Proof for Prop cellular CP SCMA}
and thus is omitted due to space limitation.\end{proof}

We then search for the optimal activated probability $q_{\mathrm{D}}^{*}$
according to the following objective function
\begin{align}
q_{\mathrm{D}}^{*} & =\arg\underset{q_{\mathrm{D}}}{\max}\: u_{\mathrm{U}}\left(\hat{\mathcal{A}}_{\mathrm{C}}^{\mathrm{SU}},\hat{\mathcal{A}}_{\mathrm{D}}^{\mathrm{SU}}\right),\label{eq:original problem underlaid}
\end{align}
where $q_{\mathrm{D}}\in\left[0,1\right]$ and $u_{\mathrm{U}}\left(\hat{\mathcal{A}}_{\mathrm{C}}^{\mathrm{SU}},\hat{\mathcal{A}}_{\mathrm{D}}^{\mathrm{SU}}\right)$
is a utility function that is in different forms according to different
design targets. In our work, we define the utility function based
on the most commonly used proportional fairness, i.e.,
\begin{equation}
u_{\mathrm{U}}\left(\hat{\mathcal{A}}_{\mathrm{C}}^{\mathrm{SU}},\hat{\mathcal{A}}_{\mathrm{D}}^{\mathrm{SU}}\right)=\log\hat{\mathcal{A}}_{\mathrm{C}}^{\mathrm{SU}}+\log\hat{\mathcal{A}}_{\mathrm{D}}^{\mathrm{SU}},\label{eq:utility function underlaid}
\end{equation}
where $\hat{\mathcal{A}}_{\mathrm{C}}^{\mathrm{SU}}$ and $\hat{\mathcal{A}}_{\mathrm{D}}^{\mathrm{SU}}$
are given by Corollary \ref{corollary: ASE using D2D MAP}. According
to Corollary \ref{corollary: ASE using D2D MAP}, it is observed that
$q_{\mathrm{D}}$ has a complicated influence on $\hat{\mathcal{A}}_{\mathrm{C}}^{\mathrm{SU}}$
and $\hat{\mathcal{A}}_{\mathrm{D}}^{\mathrm{SU}}$. Although $q_{\mathrm{D}}^{*}$
can be numerically obtained, a closed-form solution is unobtainable.
In the following, we consider two special cases such that the approximate
$q_{\mathrm{D}}^{*}$ can be derived in closed-form.

We first consider $\rho_{\mathrm{S}}^{\dagger}\tau_{\mathrm{dis}}^{2}$
in Corollary \ref{corollary: ASE using D2D MAP} satisfies $\rho_{\mathrm{S}}^{\dagger}\tau_{\mathrm{dis}}^{2}\rightarrow0$.
This corresponds to the case, where $\xi$ is small%
\footnote{According to (\ref{eq:D2D link length distribution}), $\mathbb{E}\left[r_{\mathrm{D},k}\right]=\frac{1}{2\sqrt{\xi}}$
such that $\mathbb{E}\left[r_{\mathrm{D},k}\right]$ scales inversely
with $\sqrt{\xi}$. For instance, when $\xi=5\times10^{-5}$, $\mathbb{E}\left[r_{\mathrm{D},k}\right]\approx70.7\:\mathrm{m}$.
Therefore, the value of $\xi$ is usually small due to the proximity
nature of D2D transmissions.%
}, the intensities of cellular users and D2D transmitters are small
and mode selection threshold is small. Under this condition, we give
the approximate $q_{\mathrm{D}}^{*}$ in the following theorem.

\begin{theorem}

\noindent In SCMA enhanced D2D underlaid cellular network, when $\rho_{\mathrm{S}}^{\dagger}\tau_{\mathrm{dis}}^{2}\rightarrow0$,
the optimal activated probability that maximizes the proportional
fairness utility function is $q_{\mathrm{D}}^{*}=1$.

\label{theorem: optimize MAP approximation}

\end{theorem}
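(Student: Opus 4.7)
The plan is to exploit the hypothesis $\rho_{\mathrm{S}}^{\dagger}\tau_{\mathrm{dis}}^{2}\to 0$ to collapse the D2D coverage probability $\hat{\mathrm{CP}}_{\mathrm{DR}}^{\mathrm{SU}}$ into a quantity that is constant in $q_{\mathrm{D}}$, and then to verify directly that the resulting proportional-fairness objective is strictly increasing in $q_{\mathrm{D}}$ on $(0,1]$, forcing the maximizer on $[0,1]$ to be the upper endpoint $q_{\mathrm{D}}^{*}=1$.

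First, I would expand $1-e^{-x}=x+O(x^{2})$ in the closed form $\hat{\mathrm{CP}}_{\mathrm{DR}}^{\mathrm{SU}}\approx\frac{\pi\xi}{\rho_{\mathrm{S}}^{\dagger}}\left(1-e^{-\rho_{\mathrm{S}}^{\dagger}\tau_{\mathrm{dis}}^{2}}\right)$ supplied by Corollary \ref{corollary: ASE using D2D MAP}. The prefactor $1/\rho_{\mathrm{S}}^{\dagger}$ cancels, leaving $\hat{\mathrm{CP}}_{\mathrm{DR}}^{\mathrm{SU}}\to\pi\xi\tau_{\mathrm{dis}}^{2}$, which is free of $q_{\mathrm{D}}$. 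Substituting into $\hat{\mathcal{A}}_{\mathrm{D}}^{\mathrm{SU}}=q_{\mathrm{D}}\lambda_{\mathrm{DT}}\hat{\mathrm{CP}}_{\mathrm{DR}}^{\mathrm{SU}}\log\left(1+\tau_{\mathrm{DR}}\right)$, the only $q_{\mathrm{D}}$-dependent contribution to $\log\hat{\mathcal{A}}_{\mathrm{D}}^{\mathrm{SU}}$ is the term $\log q_{\mathrm{D}}$.

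Next, I would write the denominator of $\hat{\mathrm{CP}}_{\mathrm{BS}}^{\mathrm{SU}}$ in (\ref{eq:CP at BS SCMA MAP}) in the affine form $C+B q_{\mathrm{D}}$, with $C=\lambda_{\mathrm{BS}}+\frac{q_{\mathrm{U}}^{\mathrm{S}}\lambda_{\mathrm{UT}}}{J}HyF_{2}$ gathering the $q_{\mathrm{D}}$-independent terms and $B=\frac{2\pi\lambda_{\mathrm{DT}}\left(\tilde{\tau}_{\mathrm{BS}}\eta_{\mathrm{P}}\right)^{\delta}}{J\alpha\sin\left(2\pi/\alpha\right)}\prod_{n=2}^{N_{\mathrm{C}}}\left(\frac{2}{(n-1)\alpha}+1\right)\geq 0$ collecting the cross-tier-interference coefficient. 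Since the expression must yield a coverage probability in $[0,1]$, one has $C>0$. Combining the two logarithms and discarding additive constants in $q_{\mathrm{D}}$, the objective (\ref{eq:utility function underlaid}) reduces to $u(q_{\mathrm{D}})=\log q_{\mathrm{D}}-\log\left(C+B q_{\mathrm{D}}\right)$.

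Finally, differentiation gives $u'(q_{\mathrm{D}})=\frac{1}{q_{\mathrm{D}}}-\frac{B}{C+B q_{\mathrm{D}}}=\frac{C}{q_{\mathrm{D}}\left(C+B q_{\mathrm{D}}\right)}>0$ for every $q_{\mathrm{D}}\in(0,1]$. Hence $u$ has no stationary point in the interior, is strictly increasing on $(0,1]$, and attains its maximum on the feasible interval at the boundary, giving $q_{\mathrm{D}}^{*}=1$. The main obstacle I anticipate is establishing $C>0$ cleanly, since the sign of $HyF_{2}$ is not transparent from its series definition; this can be handled either by invoking the valid-probability constraint on (\ref{eq:CP at BS SCMA MAP}) or by recognizing $\frac{q_{\mathrm{U}}^{\mathrm{S}}\lambda_{\mathrm{UT}}}{J}HyF_{2}$ as a Laplace-transform-derived (hence nonnegative) inter-cell interference intensity.
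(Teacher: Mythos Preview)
Your proposal is correct and follows essentially the same approach as the paper: both first collapse $\hat{\mathrm{CP}}_{\mathrm{DR}}^{\mathrm{SU}}$ to the constant $\pi\xi\tau_{\mathrm{dis}}^{2}$ via the first-order Taylor expansion, then reduce the objective to (a monotone transform of) $q_{\mathrm{D}}/(Q_{1}+Q_{2}q_{\mathrm{D}})$ with your $C,B$ matching the paper's $Q_{1},Q_{2}$, and finally verify the function is strictly increasing on $(0,1]$. The only cosmetic difference is that you differentiate the logarithmic form directly whereas the paper first strips the log and differentiates the ratio; your explicit attention to the positivity of $C$ is a point the paper simply assumes.
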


\begin{proof}According to Corollary \ref{corollary: ASE using D2D MAP},
when $\rho_{\mathrm{S}}^{\dagger}\tau_{\mathrm{dis}}^{2}\rightarrow0$,
the coverage probability at the typical D2D receiver using SCMA is
approximated as $\hat{\mathrm{CP}}_{\mathrm{DR}}^{\mathrm{SU}}=\pi\xi\tau_{\mathrm{dis}}^{2}$,
which is due to the first order Taylor expansion of $e^{-\rho_{\mathrm{S}}^{\dagger}\tau_{\mathrm{dis}}^{2}}$.
Therefore, the objective function defined in (\ref{eq:original problem underlaid})
turns into
\[
q_{\mathrm{D}}^{*}=\arg\underset{q_{\mathrm{D}}}{\max}\:\log\left[\frac{q_{\mathrm{U}}^{\mathrm{S}}\lambda_{\mathrm{U}}\lambda_{\mathrm{BS}}\lambda_{\mathrm{DT}}\pi\xi\tau_{\mathrm{dis}}^{2}q_{\mathrm{D}}}{Q_{1}+Q_{2}q_{\mathrm{D}}}\mathrm{log}\left(1+\tau_{\mathrm{BS}}\right)\mathrm{log}\left(1+\tau_{\mathrm{DR}}\right)\right],
\]
where we denote $Q_{1}=\lambda_{\mathrm{BS}}+\frac{q_{\mathrm{U}}^{\mathrm{S}}\lambda_{\mathrm{UT}}}{J}HyF_{2}$
and $Q_{2}=\frac{2\pi\lambda_{\mathrm{DT}}\left(\tilde{\tau}_{\mathrm{BS}}\eta_{\mathrm{P}}\right)^{\delta}}{J\alpha\sin\left(\frac{2\pi}{\alpha}\right)}\underset{_{n=2}}{\overset{_{N_{\mathrm{C}}}}{\prod}}\left(\frac{2}{\left(n-1\right)\alpha}+1\right)$
for notation simplicity. As $\log\left(\cdot\right)$ is a monotonically
increasing function, $q_{\mathrm{D}}^{*}$ can be obtained by searching
for the optimal $q_{\mathrm{D}}$ to maximize $\bar{u}_{\mathrm{U}}\left(q_{\mathrm{D}}\right)=\frac{q_{\mathrm{D}}}{Q_{1}+Q_{2}q_{\mathrm{D}}}$.
It is easy to prove that $\nabla\bar{u}_{\mathrm{U}}\left(q_{\mathrm{D}}\right)>0$
and $\nabla^{2}\bar{u}_{\mathrm{U}}\left(q_{\mathrm{D}}\right)<0$
when $q_{\mathrm{D}}\in\left[0,1\right]$, where $\nabla\bar{u}_{\mathrm{U}}\left(q_{\mathrm{D}}\right)$
and $\nabla^{2}\bar{u}_{\mathrm{U}}\left(q_{\mathrm{D}}\right)$ denote
the first and second derivatives of $\bar{u}_{\mathrm{U}}$, respectively.
Therefore, $\bar{u}_{\mathrm{U}}\left(q_{\mathrm{D}}\right)$ is a
concave function and an increasing function of $q_{\mathrm{D}}$.
As $q_{\mathrm{D}}\in\left[0,1\right]$, the optimal $q_{\mathrm{D}}$
to maximize $\bar{u}_{\mathrm{U}}\left(q_{\mathrm{D}}\right)$ equals
1.\end{proof}

It can be seen from Theorem \ref{theorem: optimize MAP approximation}
that when $\rho_{\mathrm{S}}^{\dagger}\tau_{\mathrm{dis}}^{2}\rightarrow0$,
all the D2D transmitters should contend for available resources such
that the proportional fairness utility function can be maximized.
This coincides with the intuition. In particular, when cellular users
and D2D users are sparsely distributed, spectrum resources cannot
be fully exploited. Therefore, enabling full access of D2D users can
boost the spectrum utilization of D2D network without significantly
degenerating the performance of cellular network.

Next, we consider $\tau_{\mathrm{dis}}\rightarrow\infty$. In this
case, all the D2D users select D2D mode for data transmission. The
optimal $q_{\mathrm{D}}$ to maximize (\ref{eq:original problem underlaid})
is given by the following theorem.

\begin{theorem}

\noindent In SCMA enhanced D2D underlaid cellular network, when $\tau_{\mathrm{dis}}\rightarrow\infty$,
the optimal activated probability to maximize (\ref{eq:original problem underlaid})
is given by
\begin{equation}
q_{\mathrm{D}}^{*}=\left\{ \begin{array}{cc}
\sqrt{\frac{Q_{1}Q_{4}}{Q_{2}Q_{3}}}, & \mathrm{if}\: Q_{1}Q_{4}<Q_{2}Q_{3},\\
1, & \mathrm{if}\: Q_{1}Q_{4}\geq Q_{2}Q_{3},
\end{array}\right.\label{eq:optimal MAP}
\end{equation}
where $Q_{3}=\frac{2\pi\tilde{\tau}_{\mathrm{DR}}^{\delta}\lambda_{\mathrm{DT}}}{J\alpha\sin\left(\frac{2\pi}{\alpha}\right)}\underset{_{n=2}}{\overset{_{N_{\mathrm{C}}}}{\prod}}\left(\frac{2}{\left(n-1\right)\alpha}+1\right)$
and $Q_{4}=\xi+\frac{2\pi\tilde{\tau}_{\mathrm{DR}}^{\delta}q_{\mathrm{U}}^{\mathrm{S}}\lambda_{\mathrm{UT}}\eta_{\mathrm{P}}^{-\delta}}{J\alpha\sin\left(\frac{2\pi}{\alpha}\right)}\underset{_{n=2}}{\overset{_{N_{\mathrm{C}}}}{\prod}}\left(\frac{2}{\left(n-1\right)\alpha}+1\right)$.

\label{theorem: optimize MAP no mode selection}

\end{theorem}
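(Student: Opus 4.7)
The plan is to specialize Corollary~\ref{corollary: ASE using D2D MAP} to the limit $\tau_{\mathrm{dis}}\to\infty$, reduce the utility to a single-variable function of $q_{\mathrm{D}}$, solve its first-order condition in closed form, and then project the result onto the interval $[0,1]$.

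First I would observe that as $\tau_{\mathrm{dis}}\to\infty$ the mode-selection rule gives $\lambda_{\mathrm{UT}}\to\lambda_{\mathrm{U}}$ and $\lambda_{\mathrm{DT}}\to\lambda_{\mathrm{D}}$, while $\rho_{\mathrm{S}}^{\dagger}$ stays finite and positive, so $e^{-\rho_{\mathrm{S}}^{\dagger}\tau_{\mathrm{dis}}^{2}}\to 0$ and hence $\hat{\mathrm{CP}}_{\mathrm{DR}}^{\mathrm{SU}}\to \pi\xi/\rho_{\mathrm{S}}^{\dagger}$. Regrouping the terms that define $\rho_{\mathrm{S}}^{\dagger}$ in terms of $Q_{3}$ and $Q_{4}$ yields the identity $\rho_{\mathrm{S}}^{\dagger}=\pi(q_{\mathrm{D}}Q_{3}+Q_{4})$, so that $\hat{\mathrm{CP}}_{\mathrm{DR}}^{\mathrm{SU}}=\xi/(q_{\mathrm{D}}Q_{3}+Q_{4})$. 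Likewise, $\hat{\mathrm{CP}}_{\mathrm{BS}}^{\mathrm{SU}}=\lambda_{\mathrm{BS}}/(Q_{1}+q_{\mathrm{D}}Q_{2})$. Plugging these into $u_{\mathrm{U}}=\log\hat{\mathcal{A}}_{\mathrm{C}}^{\mathrm{SU}}+\log\hat{\mathcal{A}}_{\mathrm{D}}^{\mathrm{SU}}$ and absorbing every quantity independent of $q_{\mathrm{D}}$ into an additive constant, the problem collapses to maximizing $f(q_{\mathrm{D}})=\log q_{\mathrm{D}}-\log(Q_{1}+q_{\mathrm{D}}Q_{2})-\log(q_{\mathrm{D}}Q_{3}+Q_{4})$ over $q_{\mathrm{D}}\in[0,1]$.

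Next I would differentiate and use the identity $\tfrac{1}{q_{\mathrm{D}}}-\tfrac{Q_{2}}{Q_{1}+q_{\mathrm{D}}Q_{2}}=\tfrac{Q_{1}}{q_{\mathrm{D}}(Q_{1}+q_{\mathrm{D}}Q_{2})}$ to write $f'(q_{\mathrm{D}})=\tfrac{Q_{1}}{q_{\mathrm{D}}(Q_{1}+q_{\mathrm{D}}Q_{2})}-\tfrac{Q_{3}}{q_{\mathrm{D}}Q_{3}+Q_{4}}$. Setting $f'(q_{\mathrm{D}})=0$ and clearing denominators collapses most of the algebra and leaves $Q_{1}Q_{4}=q_{\mathrm{D}}^{2}Q_{2}Q_{3}$, with unique positive root $\bar{q}_{\mathrm{D}}=\sqrt{Q_{1}Q_{4}/(Q_{2}Q_{3})}$. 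To conclude I would verify optimality and impose the constraint: since $f'(0^{+})=+\infty$ while $f'(q_{\mathrm{D}})\sim -1/q_{\mathrm{D}}<0$ as $q_{\mathrm{D}}\to\infty$, and $\bar{q}_{\mathrm{D}}$ is the only positive stationary point, $f$ is strictly increasing on $(0,\bar{q}_{\mathrm{D}})$ and strictly decreasing on $(\bar{q}_{\mathrm{D}},\infty)$. Hence the constrained maximizer on $[0,1]$ equals $\bar{q}_{\mathrm{D}}$ when $\bar{q}_{\mathrm{D}}\leq 1$, i.e.\ when $Q_{1}Q_{4}<Q_{2}Q_{3}$, and equals $1$ otherwise, which is precisely the statement in (\ref{eq:optimal MAP}).

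The main obstacle I anticipate is the bookkeeping needed to establish the decomposition $\rho_{\mathrm{S}}^{\dagger}=\pi(q_{\mathrm{D}}Q_{3}+Q_{4})$: the definitions of $Q_{3}$, $Q_{4}$, and $\rho_{\mathrm{S}}^{\dagger}$ share nested factors of $\pi$, $\tilde{\tau}_{\mathrm{DR}}^{\delta}$, $J\alpha\sin(2\pi/\alpha)$ and the product $\prod_{n=2}^{N_{\mathrm{C}}}$, so a stray $\pi$ or $\xi$ in the regrouping would corrupt the optimality condition and the clean form of $\bar{q}_{\mathrm{D}}$. Once that identity is in hand, the derivative computation, the cancellation $q_{\mathrm{D}}^{2}Q_{2}Q_{3}=Q_{1}Q_{4}$, and the sign analysis at the endpoints are all routine calculus.
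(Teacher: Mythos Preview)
Your proposal is correct and follows essentially the same route as the paper: specialize Corollary~\ref{corollary: ASE using D2D MAP} at $\tau_{\mathrm{dis}}\to\infty$, reduce the utility to a one-variable problem in $q_{\mathrm{D}}$, solve the first-order condition to obtain $q_{\mathrm{D}}^{2}Q_{2}Q_{3}=Q_{1}Q_{4}$, and project onto $[0,1]$. The only cosmetic difference is that the paper rewrites the maximization of $u_{\mathrm{U}}$ as the minimization of the product $\tilde{u}_{\mathrm{U}}(q_{\mathrm{D}})=(Q_{1}+Q_{2}q_{\mathrm{D}})(Q_{3}+Q_{4}q_{\mathrm{D}}^{-1})$ and certifies global optimality via convexity ($\nabla^{2}\tilde{u}_{\mathrm{U}}=2Q_{1}Q_{4}/q_{\mathrm{D}}^{3}>0$), whereas you work directly with $f(q_{\mathrm{D}})$ and certify unimodality by the sign of $f'$ at $0^{+}$ and $\infty$ together with uniqueness of the stationary point; the two arguments are equivalent.
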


\begin{proof}When $\tau_{\mathrm{dis}}\rightarrow\infty$, all the
D2D users select D2D mode, i.e., $\mathbb{P}\left(r_{\mathrm{D},k}\leq\tau_{\mathrm{dis}}\right)=1$.
According to Corollary \ref{corollary: ASE using D2D MAP}, D2D network
ASE degenerates into $\hat{\mathcal{A}}_{\mathrm{D}}^{\mathrm{SU}}=\frac{\lambda_{\mathrm{DT}}\xi}{Q_{3}+Q_{4}q_{\mathrm{D}}^{-1}}\mathrm{log}\left(1+\tau_{\mathrm{DR}}\right)$
and cellular network ASE is given by $\hat{\mathcal{A}}_{\mathrm{C}}^{\mathrm{SU}}=\frac{q_{\mathrm{U}}^{\mathrm{S}}\lambda_{\mathrm{UT}}\lambda_{\mathrm{BS}}}{Q_{1}+Q_{2}q_{\mathrm{D}}}\mathrm{log}\left(1+\tau_{\mathrm{BS}}\right)$.
Since $\log\left(\cdot\right)$ is a monotonically increasing function,
the original optimization problem (\ref{eq:original problem underlaid})
can be converted into $q_{\mathrm{D}}^{*}=\arg\underset{q_{\mathrm{D}}}{\min}\tilde{u}_{\mathrm{U}}\left(q_{\mathrm{D}}\right)$,
where $\tilde{u}_{\mathrm{U}}\left(q_{\mathrm{D}}\right)=\left(Q_{1}+Q_{2}q_{\mathrm{D}}\right)\left(Q_{3}+Q_{4}q_{\mathrm{D}}^{-1}\right)$.
We obtain the second derivative of $\tilde{u}_{\mathrm{U}}\left(q_{\mathrm{D}}\right)$
as $\nabla^{2}\tilde{u}_{\mathrm{U}}\left(q_{\mathrm{D}}\right)=\frac{2Q_{1}Q_{4}}{q_{\mathrm{D}}^{3}}$.
Since $Q_{1}>0$ and $Q_{4}>0$, it is obvious that $\nabla^{2}\tilde{u}_{\mathrm{U}}\left(q_{\mathrm{D}}\right)>0$
when $q_{\mathrm{D}}\in\left[0,1\right]$. According to \cite{book_convex_optimization},
$\tilde{u}_{\mathrm{U}}\left(q_{\mathrm{D}}\right)$ is a convex function
of $q_{\mathrm{D}}$. Consequently, the results in (\ref{eq:optimal MAP})
can be obtained by solving $\nabla\tilde{u}_{\mathrm{U}}\left(q_{\mathrm{D}}\right)=0$.\end{proof}

It is worth noting from Theorem \ref{theorem: optimize MAP no mode selection}
that $q_{\mathrm{D}}^{*}$ is an increasing function of the number
of available codebooks $J$. We intuitively explain the reason as
follows. As earlier discussed, $q_{\mathrm{D}}$ is a parameter that
can balance the cross-tier interference from D2D transmissions to
cellular transmissions. According to Corollary \ref{corollary: ASE using D2D MAP},
increasing $q_{\mathrm{D}}$ will improve the D2D network ASE. However,
more cross-tier interference will be introduced, resulting in the
decrease of cellular network ASE. Therefore, the proportional fairness
utility function defined in (\ref{eq:utility function underlaid})
cannot be maximized by excessively increasing $q_{\mathrm{D}}$ when
the interference from D2D network is overwhelming. Given small $J$,
the number of D2D users that reuse one codebook with cellular users
is large. Consequently, $q_{\mathrm{D}}$ has be kept small in order
to mitigate the cross-tier interference, thereby maximizing (\ref{eq:utility function underlaid}).
In contrast, when $J$ is large, the interference becomes ``sparse''
over one codebook such that $q_{\mathrm{D}}$ can be moderately increased
to maximize (\ref{eq:utility function underlaid}).

\section{D2D Overlaid Cellular Network\label{sec:Overlaid D2D}}

In this section, we consider the D2D overlaid cellular network, where
$J_{\mathrm{C}}$ and $J_{\mathrm{D}}$ SCMA codebooks are allocated
to cellular users and D2D users, respectively. Different from underlaid
mode, no cross-tier interference exists between cellular users and
D2D users in the overlaid mode. Therefore, codebook allocation becomes
more dominant in affecting the hybrid system performance. Here, we
intend to find the optimal codebook allocation rule based on a given
utility function
\begin{equation}
J_{\mathrm{C}}^{*}=\arg\underset{J_{\mathrm{C}}}{\max}\: u_{\mathrm{O}}\left(\mathcal{A}_{\mathrm{C}}^{\mathrm{SO}},\mathcal{A}_{\mathrm{D}}^{\mathrm{SO}}\right),\label{eq:original problem overlaid}
\end{equation}
where $\mathcal{A}_{\mathrm{C}}^{\mathrm{SO}}$ and $\mathcal{A}_{\mathrm{D}}^{\mathrm{SO}}$
are the ASEs of cellular network and D2D network, respectively, and
\begin{equation}
u_{\mathrm{O}}\left(\mathcal{A}_{\mathrm{C}}^{\mathrm{SO}},\mathcal{A}_{\mathrm{D}}^{\mathrm{SO}}\right)=\log\mathcal{A}_{\mathrm{C}}^{\mathrm{SO}}+\log\mathcal{A}_{\mathrm{D}}^{\mathrm{SO}}.\label{eq:utility function overlaid}
\end{equation}
Note that optimizing $J_{\mathrm{C}}$ is equivalent to optimizing
$J_{\mathrm{D}}$ in (\ref{eq:original problem overlaid}), since
$J=J_{\mathrm{C}}+J_{\mathrm{D}}$. In order to achieve the optimization
goal, we first determine $\mathcal{A}_{\mathrm{C}}^{\mathrm{SO}}$
and $\mathcal{A}_{\mathrm{D}}^{\mathrm{SO}}$ according to the following
corollary.

\begin{corollary}

Considering that SCMA is used in D2D overlaid cellular network, the
ASE achieved by cellular network is given by $\mathcal{A}_{\mathrm{C}}^{\mathrm{SO}}=\hat{q}_{\mathrm{U}}^{\mathrm{S}}\lambda_{\mathrm{UT}}\mathrm{CP}_{\mathrm{BS}}^{\mathrm{SO}}\mathrm{log}\left(1+\tau_{\mathrm{BS}}\right)$,
where $\mathrm{CP}_{\mathrm{BS}}^{\mathrm{SO}}$ is approximated as
\begin{equation}
\mathrm{CP}_{\mathrm{BS}}^{\mathrm{SO}}\approx\frac{\lambda_{\mathrm{BS}}}{\lambda_{\mathrm{BS}}+\frac{\hat{q}_{\mathrm{U}}^{\mathrm{S}}\lambda_{\mathrm{UT}}}{J_{\mathrm{C}}}HyF_{2}}.\label{eq:CP at BS SCMA overlaid}
\end{equation}
In (\ref{eq:CP at BS SCMA overlaid}), $\hat{q}_{\mathrm{U}}^{\mathrm{S}}$
is calculated by substituting $N_{\mathrm{R}}$ with $J_{\mathrm{C}}$
in (\ref{eq:cellular access probability}). The ASE achieved by D2D
network is given by $\mathcal{A}_{\mathrm{D}}^{\mathrm{SO}}=\lambda_{\mathrm{DT}}\mathrm{CP}_{\mathrm{DR}}^{\mathrm{SO}}\mathrm{log}\left(1+\tau_{\mathrm{DR}}\right)$,
where $\mathrm{CP}_{\mathrm{DR}}^{\mathrm{SO}}$ is approximated as
\begin{equation}
\mathrm{CP}_{\mathrm{DR}}^{\mathrm{SO}}\approx\frac{\pi\xi}{\rho_{\mathrm{O}}}\left(1-e^{-\rho_{\mathrm{O}}\tau_{\mathrm{dis}}^{2}}\right).\label{eq:CP at D2D SCMA overlaid}
\end{equation}
In (\ref{eq:CP at D2D SCMA overlaid}), $\rho_{\mathrm{O}}=\pi\xi+\frac{2\pi^{2}\tilde{\tau}_{\mathrm{DR}}^{\delta}q_{\mathrm{D}}\lambda_{\mathrm{DT}}}{J_{\mathrm{D}}\alpha\sin\left(\frac{2\pi}{\alpha}\right)}\underset{_{n=2}}{\overset{_{N_{\mathrm{C}}}}{\prod}}\left(\frac{2}{\left(n-1\right)\alpha}+1\right)$.

\label{corollary: SCMA ASE overlaid}

\end{corollary}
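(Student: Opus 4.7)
Corollary~\ref{corollary: SCMA ASE overlaid} is essentially a specialization of Proposition~\ref{proposition: SCMA ASE underlaid} to the overlaid regime, obtained by setting the indicator $\mathbf{1}_{\mathrm{U}}=0$ in (\ref{eq:recv signal at BS SCMA}) and (\ref{eq:recv signal at D2D SCMA}) and by partitioning the $J$ codebooks into the disjoint cellular and D2D pools of sizes $J_{\mathrm{C}}$ and $J_{\mathrm{D}}$. I would therefore reuse the machinery in Appendix~\ref{sub:Proof for Prop cellular CP SCMA} with three targeted substitutions: drop every cross-tier interference term; replace the per-codebook cellular density $q_{\mathrm{U}}^{\mathrm{S}}\lambda_{\mathrm{UT}}/J$ by $\hat{q}_{\mathrm{U}}^{\mathrm{S}}\lambda_{\mathrm{UT}}/J_{\mathrm{C}}$, where $\hat{q}_{\mathrm{U}}^{\mathrm{S}}$ follows from Lemma~\ref{lemma: access probability OFDMA} with $N_{\mathrm{R}}=J_{\mathrm{C}}$; and replace the per-codebook D2D density by $q_{\mathrm{D}}\lambda_{\mathrm{DT}}/J_{\mathrm{D}}$.

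\textbf{Cellular branch.} First I would write $\mathrm{SIR}_{\mathrm{BS}_{0}}^{\mathrm{SO}}$ as in (\ref{eq:SIR at BS SCMA}) with the D2D interference term removed, apply the Gamma-to-exponential approximation $G_{\mathrm{BS}_{0},\mathrm{U}_{0}}\approx W_{\mathrm{BS}_{0},\mathrm{U}_{0}}\sim\exp(N_{\mathrm{C}}^{-1})$ justified in the discussion following Proposition~\ref{proposition: SCMA ASE underlaid}, and express the conditional coverage probability as $\mathbb{E}[\exp(-\tilde{\tau}_{\mathrm{BS}}r_{\mathrm{U},0}^{\alpha}I_{\mathrm{C,C}}^{\mathrm{S}}/P_{\mathrm{U}}^{\dagger})]$. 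The Laplace transform of the co-tier interference is obtained by the PGFL of a thinned PPP of intensity $\hat{q}_{\mathrm{U}}^{\mathrm{S}}\lambda_{\mathrm{UT}}/J_{\mathrm{C}}$, and the resulting path-loss integral yields the hypergeometric factor $HyF_{2}$ exactly as in the underlaid derivation. Averaging the resulting conditional probability against the Rayleigh distribution of $r_{\mathrm{U},0}$ induced by nearest-BS association reproduces the ratio in (\ref{eq:CP at BS SCMA overlaid}), whose denominator collapses to only two additive terms because the D2D contribution is absent.

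\textbf{D2D branch.} For $\mathrm{CP}_{\mathrm{DR}}^{\mathrm{SO}}$ I would proceed identically to the SCMA D2D derivation behind (\ref{eq:CP at D2D SCMA}): condition on $r_{\mathrm{D},0}\leq\tau_{\mathrm{dis}}$, invoke the same Gamma-to-exponential approximation on the desired channel, compute the Laplace transform of the purely co-tier D2D interference using a PPP of intensity $q_{\mathrm{D}}\lambda_{\mathrm{DT}}/J_{\mathrm{D}}$, and then integrate $\int_{0}^{\tau_{\mathrm{dis}}}2\pi\xi x\,e^{-\rho_{\mathrm{O}}x^{2}}\,\mathrm{d}x$ against the truncated density derived from (\ref{eq:D2D link length distribution}). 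The exponent of the path-loss integral reproduces $\prod_{n=2}^{N_{\mathrm{C}}}\!\bigl(\tfrac{2}{(n-1)\alpha}+1\bigr)$ coming from the successive moments of the sum of $N_{\mathrm{C}}$ independent Gamma fades on the interferer side, yielding $\rho_{\mathrm{O}}$ and then the closed form (\ref{eq:CP at D2D SCMA overlaid}).

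\textbf{Expected obstacle.} The analytically nontrivial part is not the specialization itself but re-deriving the Laplace-transform computation for aggregate interference when each interferer's contribution is a sum of $N_{\mathrm{C}}$ Gamma fades rather than a single exponential one; the same product factor $\prod_{n=2}^{N_{\mathrm{C}}}\!\bigl(\tfrac{2}{(n-1)\alpha}+1\bigr)$ must re-emerge here, and I would cross-check it by verifying that $N_{\mathrm{C}}=1$ collapses the formula to an OFDMA-style expression with $J_{\mathrm{D}}$ orthogonal resources. Beyond that, every step is a direct transcription of the underlaid SCMA argument with the two cross-tier Campbell--Mecke integrals deleted.
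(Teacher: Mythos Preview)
Your proposal is correct and mirrors the paper's own proof exactly: the paper likewise treats Corollary~\ref{corollary: SCMA ASE overlaid} as a direct specialization of the underlaid SCMA analysis, writing the overlaid SIR with the cross-tier term deleted and then invoking the Laplace-transform computation of Appendix~\ref{sub:Proof for Prop cellular CP SCMA} with $J$ replaced by $J_{\mathrm{C}}$ (cellular) and $J_{\mathrm{D}}$ (D2D). Your additional sanity check at $N_{\mathrm{C}}=1$ is a nice touch but not needed for the argument.
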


\begin{proof}We describe the sketch of the proof. When codebooks
are partially allocated to cellular users and D2D users, respectively,
no cross-tier interference exists between them. Therefore, the SIR
at $\mathrm{BS}_{0}$ is given by
\[
\mathrm{SIR}_{\mathrm{BS}_{0}}^{\mathrm{SU}}=\frac{\underset{m=1}{\overset{N_{\mathrm{C}}}{\sum}}P_{\mathrm{U}}^{\dagger}r_{\mathrm{U},0}^{-\alpha}\left\Vert h_{\mathrm{BS}_{0},\mathrm{U}_{0},m}\right\Vert ^{2}}{\tilde{I}_{\mathrm{C},\mathrm{C}}^{\mathrm{S}}},
\]
where $\tilde{I}_{\mathrm{C},\mathrm{C}}^{\mathrm{S}}$ is the inter-cell
interference from active cellular users outside the coverage of $\mathrm{BS}_{0}$.
Note that $\tilde{I}_{\mathrm{C},\mathrm{C}}^{\mathrm{S}}$ is different
from $I_{\mathrm{C},\mathrm{C}}^{\mathrm{S}}$ in (\ref{eq:SIR at BS SCMA}),
as only $J_{\mathrm{C}}$ codebooks are allocated to cellular users.
Using similar approach as (\ref{eq:Laplace of ICC SCMA}) in Appendix
\ref{sub:Proof for Prop cellular CP SCMA}, the coverage probability
in (\ref{eq:CP at BS SCMA overlaid}) can be obtained and hence cellular
network ASE is derived. Similarly, D2D network ASE can be derived
as well. The detail of the derivation steps is omitted due to space
limitation.\end{proof}

According to (\ref{eq:cellular access probability}), $\hat{q}_{\mathrm{U}}^{\mathrm{S}}$
in (\ref{eq:CP at BS SCMA overlaid}) is a function of $J_{\mathrm{C}}$.
Therefore, the influence of codebook allocation in the overlaid scenario
also has complicated impact on the ASE. In consequence, it is hard
to obtain the closed-form expression for $J_{\mathrm{C}}^{*}$. Next,
we consider the case, where cellular users are densely deployed such
that $\lambda_{\mathrm{UT}}\gg J_{\mathrm{C}}\lambda_{\mathrm{BS}}$.
Note that it is well accepted that the dense scenario is a common
scenario in future wireless networks. Under this condition, a proportion
of cellular users would be inactivated due to the unavailability of
SCMA codebooks even if all the codebooks are allocated to cellular
network. In this case, the density of active cellular users is $J_{\mathrm{C}}\lambda_{\mathrm{BS}}$.
Replacing $\hat{q}_{\mathrm{U}}^{\mathrm{S}}\lambda_{\mathrm{U}}$
with $J_{\mathrm{C}}\lambda_{\mathrm{BS}}$ in Corollary \ref{corollary: SCMA ASE overlaid},
cellular network ASE degenerates into
\begin{equation}
\tilde{\mathcal{A}}_{\mathrm{C}}^{\mathrm{SO}}=\frac{J_{\mathrm{C}}\lambda_{\mathrm{BS}}\mathrm{log}\left(1+\tau_{\mathrm{BS}}\right)}{HyF_{2}+1}.\label{eq:ASE at BS SCMA overlaid}
\end{equation}
Besides, if D2D users select cellular mode for data transmission,
they will be blocked with a higher probability due to the dense deployment
of cellular users. Therefore, we enable all the D2D users to select
D2D mode by setting $\tau_{\mathrm{dis}}\rightarrow\infty$. In this
case, $J_{\mathrm{C}}^{*}$ can be derived according to the following
theorem.

\begin{theorem}

\noindent We consider an SCMA enhanced D2D overlaid cellular network,
where cellular users are densely deployed. When $\tau_{\mathrm{dis}}\rightarrow\infty$,
the optimal $J_{\mathrm{C}}$ that maximizes the proportional fairness
utility function defined in (\ref{eq:utility function overlaid})
is given by
\begin{equation}
J_{\mathrm{C}}^{*}=\mathrm{round}\left(J+Q_{6}-\sqrt{Q_{6}^{2}+JQ_{6}}\right),\label{eq:optimal codebook allocation factor}
\end{equation}
where $Q_{6}=\frac{2\pi\tilde{\tau}_{\mathrm{DR}}^{\delta}\lambda_{\mathrm{DT}}}{\xi\alpha\sin\left(\frac{2\pi}{\alpha}\right)}\underset{_{n=2}}{\overset{_{N_{\mathrm{C}}}}{\prod}}\left(\frac{2}{\left(n-1\right)\alpha}+1\right)$
and $\mathrm{round}\left(\cdot\right)$ denotes the round operation.

\label{theorem: optimize codebook allocation}

\end{theorem}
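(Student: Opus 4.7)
The plan is to evaluate both ASEs under the two stated simplifications (densely deployed cellular users and $\tau_{\mathrm{dis}}\to\infty$) and then maximize a one-dimensional, strictly concave function. For the cellular side, the dense-user expression (\ref{eq:ASE at BS SCMA overlaid}) already gives $\tilde{\mathcal{A}}_{\mathrm{C}}^{\mathrm{SO}}\propto J_{\mathrm{C}}$, so $\log\tilde{\mathcal{A}}_{\mathrm{C}}^{\mathrm{SO}}=\log J_{\mathrm{C}}+\mathrm{const}$. For the D2D side, I would take $\tau_{\mathrm{dis}}\to\infty$ in (\ref{eq:CP at D2D SCMA overlaid}) to kill the exponential and obtain $\mathrm{CP}_{\mathrm{DR}}^{\mathrm{SO}}=\pi\xi/\rho_{\mathrm{O}}$. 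Then I would divide numerator and denominator by $\pi\xi$, substitute the definition of $\rho_{\mathrm{O}}$ from Corollary \ref{corollary: SCMA ASE overlaid}, and match coefficients with the stated $Q_{6}$ to recognize $\pi\xi/\rho_{\mathrm{O}}=J_{\mathrm{D}}/(J_{\mathrm{D}}+Q_{6})$. Using $J_{\mathrm{D}}=J-J_{\mathrm{C}}$, this yields $\log\mathcal{A}_{\mathrm{D}}^{\mathrm{SO}}=\log(J-J_{\mathrm{C}})-\log(J-J_{\mathrm{C}}+Q_{6})+\mathrm{const}$.

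After stripping $J_{\mathrm{C}}$-independent terms, the objective (\ref{eq:utility function overlaid}) reduces to
\begin{equation*}
f(J_{\mathrm{C}})=\log J_{\mathrm{C}}+\log(J-J_{\mathrm{C}})-\log(J-J_{\mathrm{C}}+Q_{6}).
\end{equation*}
I would first relax $J_{\mathrm{C}}$ to a real variable on $(0,J)$ and establish strict concavity from
\begin{equation*}
f''(J_{\mathrm{C}})=-J_{\mathrm{C}}^{-2}-(J-J_{\mathrm{C}})^{-2}+(J-J_{\mathrm{C}}+Q_{6})^{-2}<0,
\end{equation*}
which is immediate since $J-J_{\mathrm{C}}+Q_{6}>J-J_{\mathrm{C}}>0$. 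Setting $f'(J_{\mathrm{C}})=0$, substituting $y=J-J_{\mathrm{C}}$, and clearing denominators collapses the first-order condition into the quadratic $y^{2}+2Q_{6}y-Q_{6}J=0$, whose positive root $y=-Q_{6}+\sqrt{Q_{6}^{2}+Q_{6}J}$ yields $J_{\mathrm{C}}^{*}=J+Q_{6}-\sqrt{Q_{6}^{2}+JQ_{6}}$, matching (\ref{eq:optimal codebook allocation factor}) after applying $\mathrm{round}(\cdot)$ to snap back to the integer codebook grid.

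The main obstacle is the constant bookkeeping that identifies the theorem's $Q_{6}$ with the interference coefficient inside $\rho_{\mathrm{O}}$ times $J_{\mathrm{D}}/(\pi\xi)$; once that algebraic match is made, every remaining step is elementary. I would also briefly confirm feasibility of the continuous optimum so that the rounding in (\ref{eq:optimal codebook allocation factor}) produces an admissible allocation: $\sqrt{Q_{6}^{2}+JQ_{6}}<Q_{6}+J$ gives $J_{\mathrm{C}}^{*}>0$, and $\sqrt{Q_{6}^{2}+JQ_{6}}>Q_{6}$ gives $J_{\mathrm{C}}^{*}<J$, so the unconstrained stationary point is interior and no boundary/KKT analysis is required beyond the final integer quantization.
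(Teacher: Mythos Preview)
Your proposal is correct and follows essentially the same route as the paper: reduce both ASEs under the dense-cellular and $\tau_{\mathrm{dis}}\to\infty$ assumptions, verify concavity, and solve the first-order condition for the continuous optimum before rounding. The only cosmetic difference is that the paper strips the logarithm and works directly with $\tilde{u}_{\mathrm{O}}(J_{\mathrm{C}})=J_{\mathrm{C}}\big/\bigl(1+\tfrac{Q_{6}}{J-J_{\mathrm{C}}}\bigr)$, whereas you keep the log decomposition $f(J_{\mathrm{C}})=\log J_{\mathrm{C}}+\log(J-J_{\mathrm{C}})-\log(J-J_{\mathrm{C}}+Q_{6})$; your version makes the concavity check and the ensuing quadratic $y^{2}+2Q_{6}y-Q_{6}J=0$ more transparent, and your explicit interior-feasibility check is an addition the paper omits.
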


\begin{proof}According to Corollary \ref{corollary: SCMA ASE overlaid},
when $\tau_{\mathrm{dis}}\rightarrow\infty$, $\mathrm{CP}_{\mathrm{DR}}^{\mathrm{SO}}$
in (\ref{eq:CP at D2D SCMA overlaid}) degenerates into
\begin{equation}
\mathrm{CP}_{\mathrm{DR}}^{\mathrm{SO}}=\frac{\xi}{\xi+\frac{2\pi\tilde{\tau}_{\mathrm{DR}}^{\delta}q_{\mathrm{D}}\lambda_{\mathrm{DT}}}{J_{\mathrm{D}}\alpha\sin\left(\frac{2\pi}{\alpha}\right)}\underset{_{n=2}}{\overset{_{N_{\mathrm{C}}}}{\prod}}\left(\frac{2}{\left(n-1\right)\alpha}+1\right)}.\label{eq:CP at D2D SCMA overlaid special case}
\end{equation}

Based on (\ref{eq:ASE at BS SCMA overlaid}) and (\ref{eq:CP at D2D SCMA overlaid special case}),
the utility function in (\ref{eq:utility function overlaid}) turns
into
\begin{equation}
u_{\mathrm{O}}\left(\tilde{\mathcal{A}}_{\mathrm{C}}^{\mathrm{SO}},\tilde{\mathcal{A}}_{\mathrm{D}}^{\mathrm{SO}}\right)=\log\tilde{\mathcal{A}}_{\mathrm{C}}^{\mathrm{SO}}+\log\tilde{\mathcal{A}}_{\mathrm{D}}^{\mathrm{SO}},\label{eq:utility function overlaid 1}
\end{equation}
where $\tilde{\mathcal{A}}_{\mathrm{C}}^{\mathrm{SO}}=J_{\mathrm{C}}Q_{5}$
and $\tilde{\mathcal{A}}_{\mathrm{D}}^{\mathrm{SO}}=\frac{\lambda_{\mathrm{DT}}\mathrm{log}\left(1+\tau_{\mathrm{DR}}\right)}{1+\frac{Q_{6}}{J-J_{\mathrm{C}}}}$.
For notation simplicity, we denote $Q_{5}=\frac{\lambda_{\mathrm{BS}}\mathrm{log}\left(1+\tau_{\mathrm{BS}}\right)}{HyF_{2}+1}$
and $Q_{6}=\frac{2\pi\tilde{\tau}_{\mathrm{DR}}^{\delta}\lambda_{\mathrm{DT}}}{\xi\alpha\sin\left(\frac{2\pi}{\alpha}\right)}\underset{_{n=2}}{\overset{_{N_{\mathrm{C}}}}{\prod}}\left(\frac{2}{\left(n-1\right)\alpha}+1\right)$.
Accordingly, we rewrite (\ref{eq:utility function overlaid 1}) as
\begin{equation}
u_{\mathrm{O}}\left(\tilde{\mathcal{A}}_{\mathrm{C}}^{\mathrm{SO}},\tilde{\mathcal{A}}_{\mathrm{D}}^{\mathrm{SO}}\right)=\log\left(\frac{Q_{5}\lambda_{\mathrm{DT}}\mathrm{log}\left(1+\tau_{\mathrm{DR}}\right)J_{\mathrm{C}}}{1+\frac{Q_{6}}{J-J_{\mathrm{C}}}}\right).\label{eq:utility function overlaid 2}
\end{equation}

In (\ref{eq:utility function overlaid 2}), $Q_{5}\lambda_{\mathrm{DT}}\mathrm{log}\left(1+\tau_{\mathrm{DR}}\right)>0$
and $\log\left(\cdot\right)$ is a monotonically increasing function.
Therefore, maximizing (\ref{eq:utility function overlaid 2}) is equivalently
to maximizing $\tilde{u}_{\mathrm{O}}\left(J_{\mathrm{C}}\right)=\frac{J_{\mathrm{C}}}{1+\frac{Q_{6}}{J-J_{\mathrm{C}}}}$.
We then calculate the second derivative of $\tilde{u}_{\mathrm{O}}\left(J_{\mathrm{C}}\right)$
as $\nabla^{2}\tilde{u}_{\mathrm{O}}\left(J_{\mathrm{C}}\right)=-\frac{2Q_{6}\left(J+Q_{6}\right)}{\left(1+\frac{Q_{6}}{J-J_{\mathrm{C}}}\right)^{3}\left(J-J_{\mathrm{C}}\right)^{3}}$.
It is easy to check that $\nabla^{2}\tilde{u}_{\mathrm{O}}\left(J_{\mathrm{C}}\right)<0$
when $J_{\mathrm{C}}\in\left[0,J\right]$. Hence, $\tilde{u}_{\mathrm{O}}\left(J_{\mathrm{C}}\right)$
is concave function of $J_{\mathrm{C}}$. The optimal $J_{\mathrm{C}}$
to maximize $\tilde{u}_{\mathrm{O}}\left(J_{\mathrm{C}}\right)$ can
be derived by solving $\nabla\tilde{u}_{\mathrm{O}}\left(J_{\mathrm{C}}\right)=0$.\end{proof}

Note from (\ref{eq:optimal codebook allocation factor}) in Theorem
\ref{theorem: optimize codebook allocation} that the optimal $J_{\mathrm{C}}$
that maximizes the proportional fairness utility function is a function
of parameters of D2D network rather than those of cellular network.
This is due to the assumption that cellular users are densely deployed.
According to (\ref{eq:optimal codebook allocation factor}), cellular
network ASE $\tilde{\mathcal{A}}_{\mathrm{C}}^{\mathrm{SO}}$ increases
linearly with $J_{\mathrm{C}}$. In other words, once one more codebook
is provided, the codebook can be fully utilized by the ultra-densely
deployed cellular users, thereby linearly improving $\tilde{\mathcal{A}}_{\mathrm{C}}^{\mathrm{SO}}$.
Therefore, cellular network parameters will not affect the optimal
$J_{\mathrm{C}}$ when we target at maximizing the utility function
defined based on proportional fairness in (\ref{eq:utility function overlaid}).

Note that the codebook allocation rule, which is studied in this section,
is a non-real-time method, since we have used the tools of stochastic
geometry to evaluate the average system performance. Therefore, designing
a real-time method is out of the scope of this paper.

\section{Numerical Results \label{sec:Numerical Results}}

Numerical results and Monte Carlo simulation results have been provided
in this section to demonstrate the performance of the SCMA enhanced
D2D and cellular hybrid network. Default system parameters are set
as $K=20$, $N_{\mathrm{C}}=2$, $\alpha=4$, $\tau_{\mathrm{BS}}=\tau_{\mathrm{DR}}=10\:\mathrm{dB}$,
$P_{\mathrm{U}}=20\:\mathrm{dBmW}$ and $P_{\mathrm{D}}=20\:\mathrm{dBmW}$.
Note that although we consider that $K=20$ OFDMA tones are available,
it is practically infeasible to design $\mathrm{C}_{20}^{2}$ codebooks
due to the limitation of number of available constellations. Nonetheless,
it is practical to design 6 codebooks out of 4 OFDMA tones according
to the mapping shown in Fig. \ref{fig:Mapping SCMA OFDMA}. Therefore,
$J=30$ codebooks are available in simulations. Note that numerical
results and simulation results are drawn by lines and markers, respectively,
in the following figures.

\begin{figure}[t]
\begin{centering}
\subfloat[\label{fig:OFDMA vs SCMA varying cellular intensity}Varying intensity
of cellular users.]{\begin{centering}
\includegraphics[width=3in]{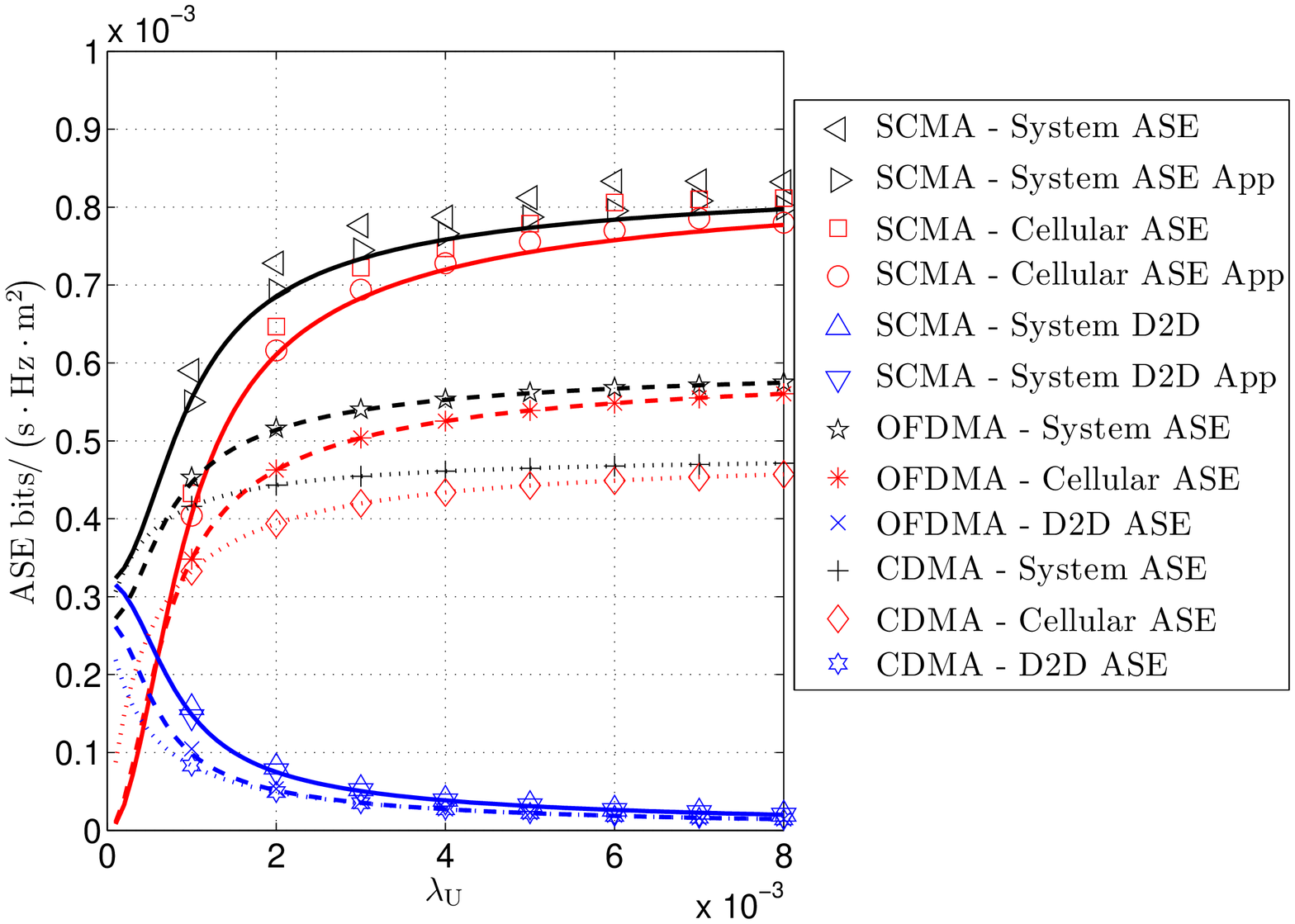}
\par\end{centering}

}\subfloat[\label{fig:OFDMA vs SCMA varying D2D intensity}Varying intensity
of D2D users.]{\begin{centering}
\includegraphics[width=3in]{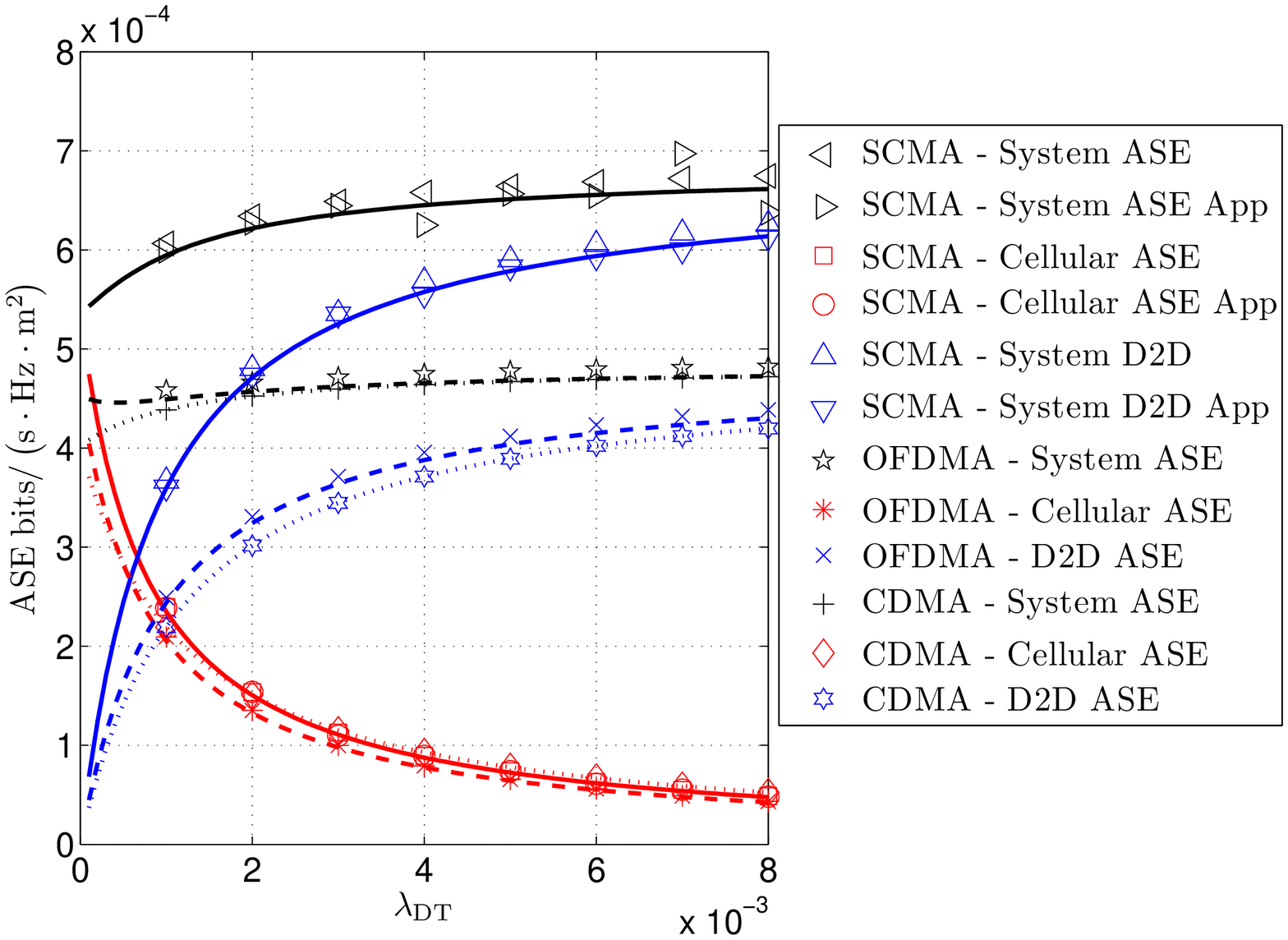}
\par\end{centering}

}
\par\end{centering}

\caption{\label{fig:OFDMA vs SCMA}ASE - CDMA, OFDMA and SCMA in the underlaid
mode. System parameters are set as $\xi=5\times10^{-5}$ and $\lambda_{\mathrm{BS}}=5\times10^{-5}\:\mathrm{BSs/\mathrm{m}^{2}}$.
In (a), $\lambda_{\mathrm{DT}}=2.5\times10^{-4}\:\mathrm{users/\mathrm{m}^{2}}$.
In (b), $\lambda_{\mathrm{U}}=1\times10^{-3}\:\mathrm{users/\mathrm{m}^{2}}$.
We set $\tau_{\mathrm{dis}}\rightarrow\infty$ so that all D2D users
select D2D mode for data transmission.}
\end{figure}

Fig. \ref{fig:OFDMA vs SCMA} shows the cellular network ASE, D2D
network ASE and system ASE when code division multiple access (CDMA),
OFDMA and SCMA are applied, respectively, as a function of the intensity
of cellular users and intensity of D2D users. Note that the ASE analysis
of the CDMA system is based on the results in \cite{Spectrum_sharing_ad_hoc}.
It is obvious from Fig. \ref{fig:OFDMA vs SCMA varying cellular intensity}
(Fig. \ref{fig:OFDMA vs SCMA varying D2D intensity}) that cellular
(D2D) network ASE first increases and then keeps stable as the intensity
of cellular (D2D) users increases. We take the case in Fig. \ref{fig:OFDMA vs SCMA varying cellular intensity}
for example. Given small $\lambda_{\mathrm{U}}$, more cellular users
can be served when $\lambda_{\mathrm{U}}$ increases. Therefore, spectrum
resources can be better exploited by admitting more cellular users
into the system. When $\lambda_{\mathrm{U}}$ is sufficiently large,
limited number of cellular users can be served due to the association
rule such that cellular network ASE hardly increases.  In this case,
SCMA significantly outperforms OFDMA due to the overloading gain harvested
by SCMA, i.e., more orthogonal resources (codebooks) can be provided
by SCMA compared to OFDMA. Therefore, more cellular users can be kept
active. Besides, we see that the performance of the CDMA network is
worse than that of the OFDMA network. The reason can be explained
as follows. Using CDMA, interference is averaged out, as CDMA signals
are spread over all the available tones. Nonetheless, the interference
becomes overwhelming as the user density increases. Hence, when cellular
users are densely deployed, it is preferable to use OFDMA, where orthogonal
resources are separately allocated to cellular users such that no
inter-user interference exists within one cell. In addition, it is
observed that little gaps exist between numerical results and simulation
results when SCMA is applied. As discussed in Section \ref{sub:OFDMA-VS-SCMA},
we use an exponentially distributed random variable to approximate
the small scale fading, which is Gamma distributed. The approximation
can provide high accuracy, as we select $N_{\mathrm{C}}=2$, which
is the typical SCMA setting.

It is shown from Fig. \ref{fig:OFDMA vs SCMA varying cellular intensity}
and Fig. \ref{fig:OFDMA vs SCMA varying D2D intensity} that the asymptotic
system ASE gains of SCMA over OFDMA can reach $\frac{\mathcal{A}_{\mathrm{C}}^{\mathrm{SU}}+\mathcal{A}_{\mathrm{D}}^{\mathrm{SU}}}{\mathcal{A}_{\mathrm{C}}^{\mathrm{OU}}+\mathcal{A}_{\mathrm{D}}^{\mathrm{OU}}}\overset{\lambda_{\mathrm{U}}\rightarrow\infty}{\longrightarrow}138.74\%$
and $\frac{\mathcal{A}_{\mathrm{C}}^{\mathrm{SU}}+\mathcal{A}_{\mathrm{D}}^{\mathrm{SU}}}{\mathcal{A}_{\mathrm{C}}^{\mathrm{OU}}+\mathcal{A}_{\mathrm{D}}^{\mathrm{OU}}}\overset{\lambda_{\mathrm{D}}\rightarrow\infty}{\longrightarrow}140.12\%$,
respectively, under the given system parameters. From the above results,
we conclude that SCMA can serve as an efficient multiple access scheme
in the D2D hybrid cellular network to enable massive connectivity.

\begin{figure}[t]
\begin{centering}
\includegraphics[width=3.5in]{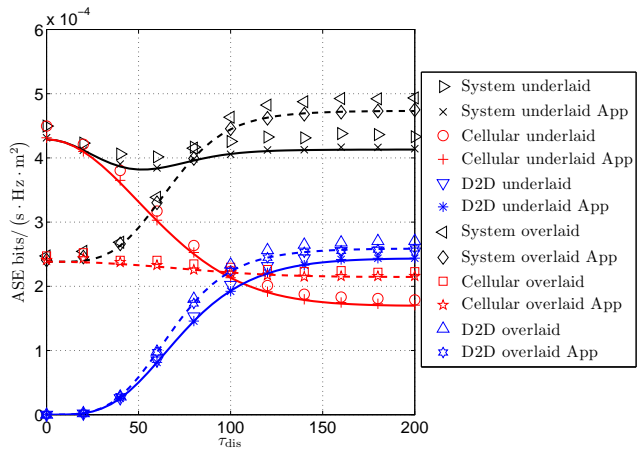}
\par\end{centering}

\caption{\label{fig:mode selection}The impact of mode selection on ASE performance
in the underlaid mode and overlaid mode. System parameters are set
as $\xi=5\times10^{-5}$, $\lambda_{\mathrm{BS}}=5\times10^{-5}\:\mathrm{BSs/\mathrm{m}^{2}}$,
$\lambda_{\mathrm{U}}=5\times10^{-4}\:\mathrm{users/\mathrm{m}^{2}}$
and $\lambda_{\mathrm{D}}=2.5\times10^{-4}\:\mathrm{users/\mathrm{m}^{2}}$.
In the overlaid mode, $J_{\mathrm{C}}=10$ and $J_{\mathrm{D}}=20$.}
\end{figure}

Fig. \ref{fig:mode selection} shows the ASE as a function of the
mode selection threshold $\tau_{\mathrm{dis}}$ in the underlaid mode
and overlaid mode. When $\tau_{\mathrm{dis}}=0\:\mathrm{m}$, all
the D2D users select cellular mode. In consequence, better ASE performance
can be achieved by the underlaid mode. This is obvious since more
codebooks are available in the underlaid mode. When $\tau_{\mathrm{dis}}$
increases, more D2D users select D2D mode for data transmission. Therefore,
it is clear from Fig. \ref{fig:mode selection} that D2D network ASE
increases with $\tau_{\mathrm{dis}}$. In contrast, cellular network
ASE decreases with $\tau_{\mathrm{dis}}$ in both coexisting modes,
since the available resources cannot be fully exploited when the number
of cellular users decreases. Besides, as the number of active D2D
transmitters increases with $\tau_{\mathrm{dis}}$, the cross-tier
interference from D2D network to cellular network becomes overwhelming
with the increasing $\tau_{\mathrm{dis}}$ in the underlaid mode.
Therefore, cellular network ASE is degraded faster in the underlaid
mode. As a result, the overlaid mode outperforms the underlaid mode
in terms of system ASE performance when $\tau_{\mathrm{dis}}$ is
large under this system setting.

\begin{figure}[t]
\begin{centering}
\includegraphics[width=3.5in]{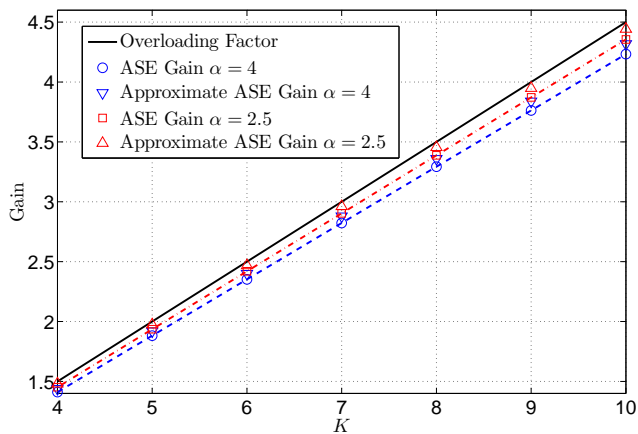}
\par\end{centering}

\caption{\label{fig:ASE gain vs overloading factor}ASE gain v.s. overloading
factor. System parameters are set as $\lambda_{\mathrm{BS}}=5\times10^{-5}\:\mathrm{BSs/\mathrm{m}^{2}}$
and $\lambda_{\mathrm{D}}=0\:\mathrm{users/\mathrm{m}^{2}}$.}

\end{figure}

Fig. \ref{fig:ASE gain vs overloading factor} compares the ASE gain
$\hat{\eta}_{\mathrm{ASE}}$ defined in (\ref{eq:ASE gain ex}) and
overloading factor $\eta_{\mathrm{overload}}$ defined in (\ref{eq:overloading factor})
as a function of the number of available OFDMA tones in the system.
In order to make a direct and comprehensive comparison, we simulate
the heavily loaded cellular network, where only cellular users exist
with $\lambda_{\mathrm{U}}\gg J\lambda_{\mathrm{BS}}$ at each $K$
and no D2D users are activated. It should be noted that we have assumed
$\mathrm{C}_{K}^{N_{\mathrm{C}}}$ codebooks can be designed for the
comparison. It can be seen that, although ASE gain is always smaller
than the overloading factor, the overloading gain can be almost achieved
even when codebook reuse is enabled. The reduction of overloading
gain primarily stems from the inter-cluster interference, which is
caused by codebook reuse in different cells.  In addition, it is
shown that the ASE gain increases linearly with $K$, which is optimistic.
Nevertheless, the result is based on the assumption that the number
of codebooks increases with the number of OFDMA tones as $J=\mathrm{C}_{K}^{N_{\mathrm{C}}}$.
This is achieved by searching for $K-1$ constellations at each $K$
according to \cite{SCMA_original,SCMA_codebook_design}. Unfortunately,
the number of constellations is limited due to practical concerns.
Consequently, the performance improvement of SCMA over OFDMA is thus
limited.

\begin{figure}[t]
\begin{centering}
\subfloat[\label{fig:MAP on ASE}System ASE.]{\begin{centering}
\includegraphics[width=3in]{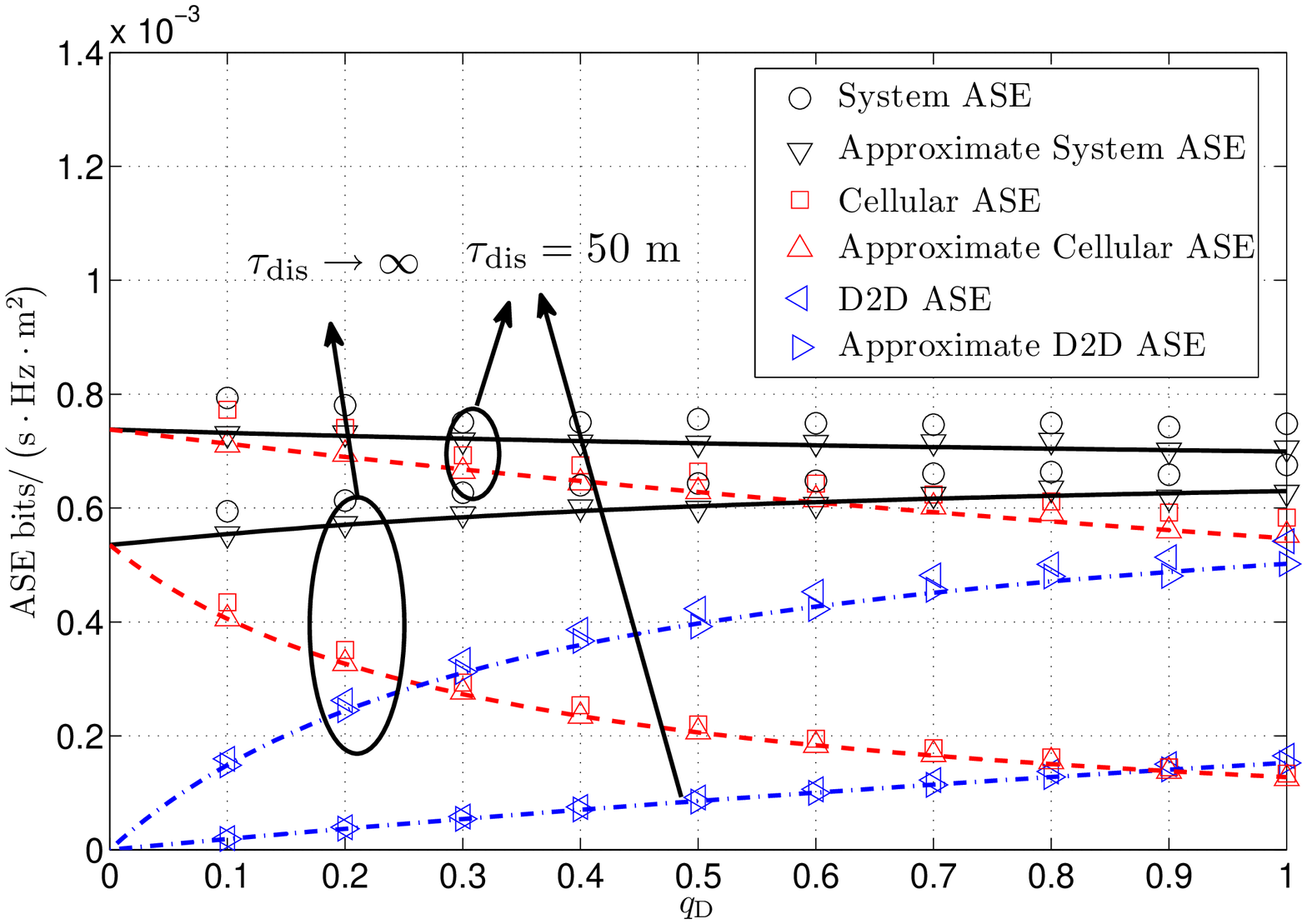}
\par\end{centering}

}\subfloat[\label{fig:Optimality of MAP.}Optimality of the activated probability.]{\begin{centering}
\includegraphics[width=3in]{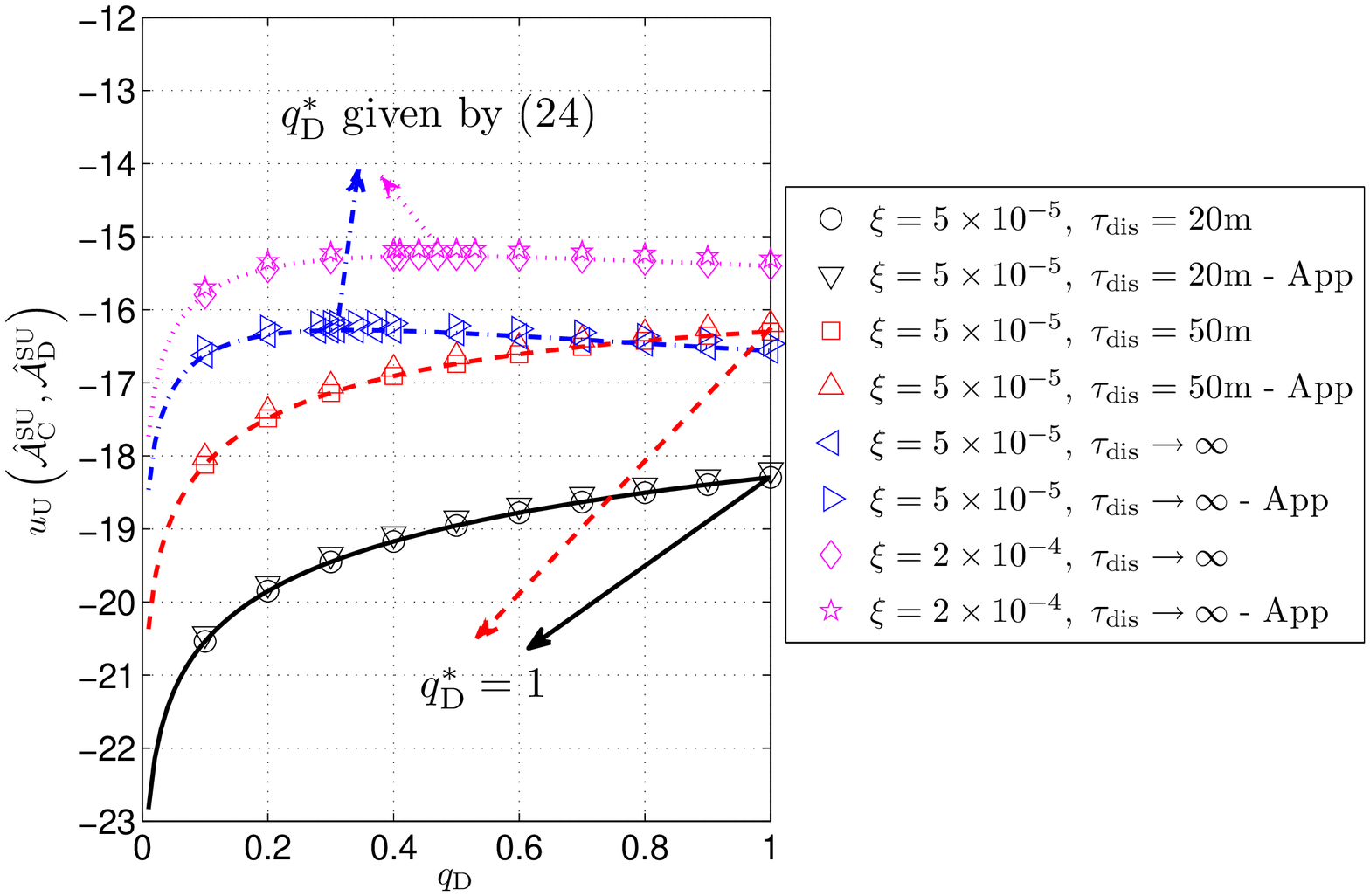}
\par\end{centering}

}
\par\end{centering}

\caption{\label{fig:D2D MAP}The impact of D2D activated probability in D2D
underlaid cellular network. System parameters are set as $\lambda_{\mathrm{BS}}=5\times10^{-5}\:\mathrm{BSs/\mathrm{m}^{2}}$,
$\lambda_{\mathrm{U}}=1\times10^{-3}\:\mathrm{users/\mathrm{m}^{2}}$
and $\lambda_{\mathrm{D}}=2.5\times10^{-3}\:\mathrm{users/\mathrm{m}^{2}}$.}
\end{figure}

We next explore the impact of D2D activated probability $q_{\mathrm{D}}$
on the performance of underlaid system in Fig. \ref{fig:D2D MAP}.
Fig. \ref{fig:MAP on ASE} shows the cellular network ASE, D2D network
ASE and system ASE as a function of $q_{\mathrm{D}}$. The tradeoff
between cellular network ASE and D2D network ASE caused by $q_{\mathrm{D}}$
is clearly presented in Fig. \ref{fig:MAP on ASE}, namely, cellular
network ASE and D2D network ASE, respectively, decreases and increases
with the increasing $q_{\mathrm{D}}$. Fig. \ref{fig:Optimality of MAP.}
shows the utility function defined in (\ref{eq:utility function underlaid})
as a function of $q_{\mathrm{D}}$. Particularly, when $\tau_{\mathrm{dis}}\rightarrow\infty$,
the utility function is an increasing function of $q_{\mathrm{D}}$
such that the optimal $q_{\mathrm{D}}$ equals 1, as indicated by
Theorem \ref{theorem: optimize MAP approximation}. Otherwise, when
$\tau_{\mathrm{dis}}$ is small, it is shown that $q_{\mathrm{D}}^{*}$
increases with $\xi$. This is because a larger $\xi$ will lead to
a smaller average D2D link length. Therefore, increasing $\xi$ would
improve $\hat{\mathcal{A}}_{\mathrm{D}}^{\mathrm{SU}}$, thereby increasing
the proportional fairness utility function.

\begin{figure}[t]
\begin{centering}
\subfloat[\label{fig:Codebook Allocation System ASE}System ASE.]{\begin{centering}
\includegraphics[width=3in]{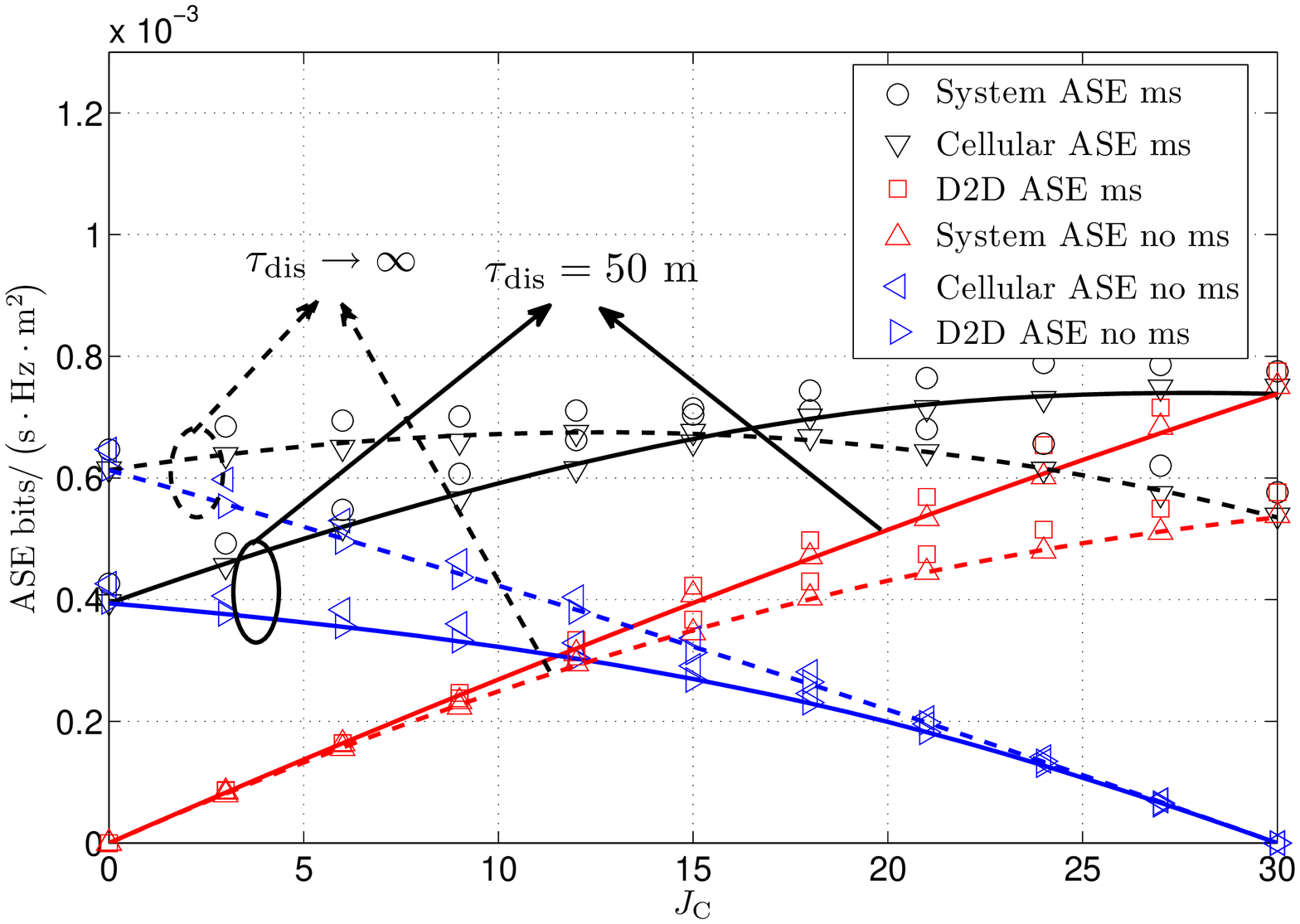}
\par\end{centering}

}\subfloat[\label{fig:Codebook allocation optimality }Optimality of codebook
allocation.]{\begin{centering}
\includegraphics[width=3in]{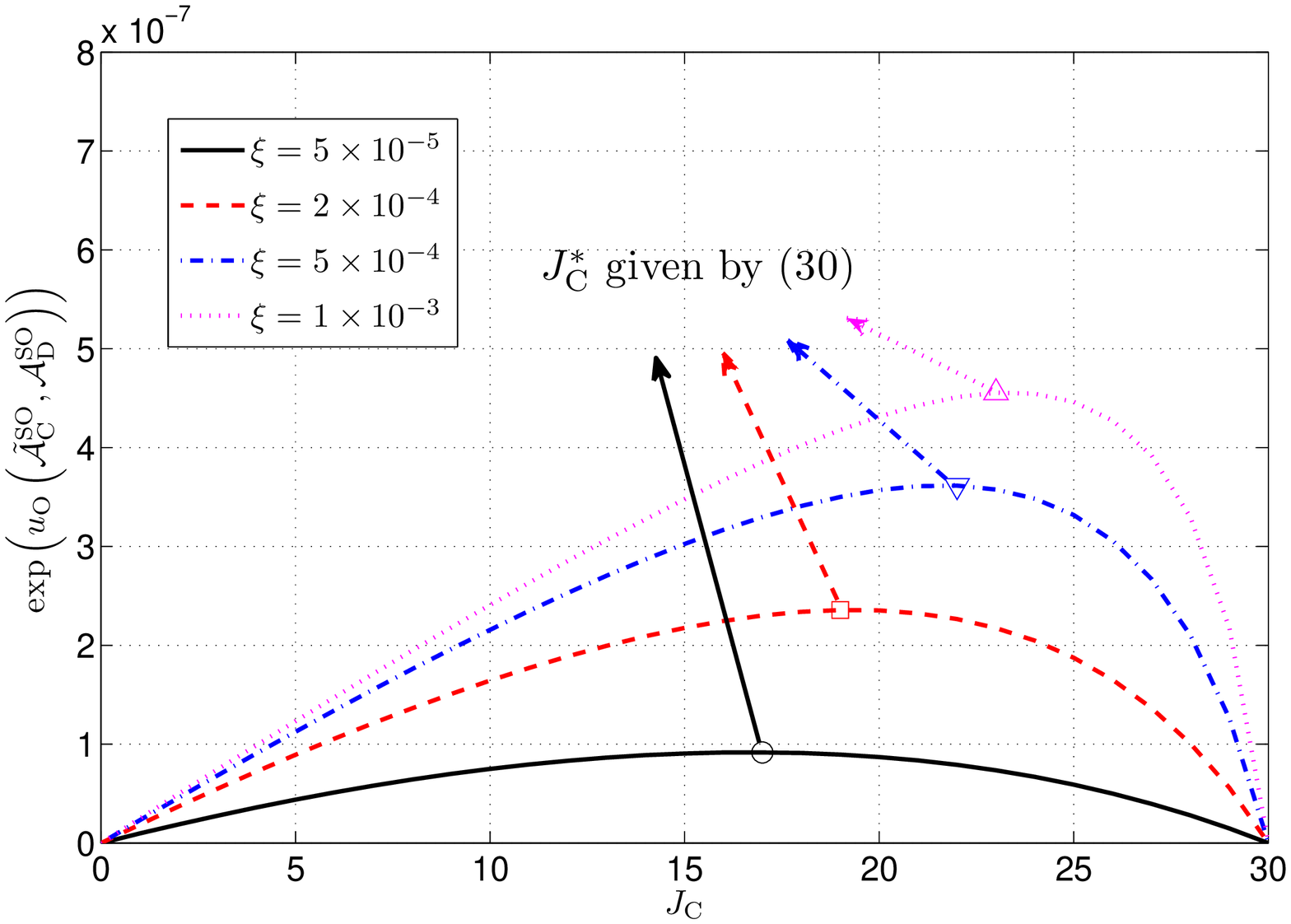}
\par\end{centering}

}
\par\end{centering}

\caption{\label{fig:Codebook Allocation Overlaid}Codebook allocation in D2D
overlaid cellular network. System parameters are set as $\lambda_{\mathrm{BS}}=5\times10^{-5}\:\mathrm{BSs/\mathrm{m}^{2}}$,
$\lambda_{\mathrm{U}}=1\times10^{-3}\:\mathrm{users/\mathrm{m}^{2}}$
and $\lambda_{\mathrm{D}}=2.5\times10^{-3}\:\mathrm{users/\mathrm{m}^{2}}$.
In (b), we set $\tau_{\mathrm{dis}}\rightarrow\infty$ such that all
D2D users select D2D mode for data transmission.}
\end{figure}

We finally investigate the performance of the D2D overlaid cellular
network when different codebook allocation schemes are applied in
Fig. \ref{fig:Codebook Allocation Overlaid}. Fig. \ref{fig:Codebook Allocation System ASE}
shows the cellular network ASE, D2D network ASE and system ASE as
a function of the number of codebooks allocated to cellular network.
It is shown that allocating more codebooks to cellular (D2D) network
would enhance the ASE of the corresponding network. This is due to
the resource allocation rule introduced in Section \ref{sub:Multiple Access Schemes}.
In particular, each cellular (D2D) user is randomly assigned with
one of the available codebooks. If more codebooks are allocated, one
codebook will be reused by less users, thereby resulting in less interference
over one codebook. Meanwhile, Fig. \ref{fig:Codebook allocation optimality }
shows the utility function defined in (\ref{eq:utility function overlaid})
as a function of the number of codebooks allocated to cellular users
$J_{\mathrm{C}}^{*}$. It is shown from the comparison between numerical
results and simulation results that $J_{\mathrm{C}}^{*}$ obtained
in Theorem \ref{theorem: optimize codebook allocation} is of high
accuracy.

\section{Conclusion\label{sec:Conclusion}}

In this paper, we have presented a stochastic geometry based framework
to investigate the performance of SCMA in D2D and cellular hybrid
network, considering underlaid mode and overlaid mode. In the underlaid
mode, we analytically compared SCMA with OFDMA from the perspective
of ASE and quantified the ASE gain of SCMA over OFDMA in closed form.
Though approximations are used, it has been shown that high accuracy
can be provided. More importantly, we concluded via the comparison
results that SCMA is capable of supporting massive D2D connections,
as well as significantly enhancing system ASE especially in the heavily
loaded network. Therefore, SCMA can be considered as a candidate of
effective multiple access schemes in future 5G wireless networks.
In addition, spectrum sharing in the two coexisting modes has been
studied. Specifically, the optimal D2D activated probability has been
derived in the underlaid mode and the optimal codebook allocation
rule has been obtained in the overlaid mode. In both cases, the optimization
targets are to maximize the proportional fairness utility function.
The results can serve as a guide to help tune system design parameters
in SCMA enhanced D2D and cellular hybrid network.

\appendix

\section{*}

\subsection{Proof for Proposition \ref{proposition:  OFDMA ASE underlaid}\label{sub:Proof for Prop cellular CP OFDMA}}

According to (\ref{eq:SIR at BS OFDMA}), we have
\begin{align}
\mathrm{CP}_{\mathrm{BS}}^{\mathrm{OU}} & =\mathbb{P}\left\{ \frac{P_{\mathrm{U}}r_{\mathrm{U},0}^{-\alpha}\left\Vert h_{\mathrm{BS}_{0},\mathrm{U}_{0}}\right\Vert ^{2}}{I_{\mathrm{C},\mathrm{C}}^{\mathrm{O}}+I_{\mathrm{C},\mathrm{D}}^{\mathrm{O}}}>\tau_{\mathrm{BS}}\right\} =\mathbb{P}\left\{ \left\Vert h_{\mathrm{BS}_{0},\mathrm{U}_{0}}\right\Vert ^{2}>\frac{\tau_{\mathrm{BS}}r_{\mathrm{U},0}^{\alpha}}{P_{\mathrm{U}}}\left(I_{\mathrm{C},\mathrm{C}}^{\mathrm{O}}+I_{\mathrm{C},\mathrm{D}}^{\mathrm{O}}\right)\right\} \nonumber \\
 & \overset{\left(\mathrm{a}\right)}{=}\mathcal{L}_{I_{\mathrm{C},\mathrm{C}}^{\mathrm{O}}}\left(s_{\mathrm{BS}}^{\mathrm{O}}\right)\mathcal{L}_{I_{\mathrm{C},\mathrm{D}}^{\mathrm{O}}}\left(s_{\mathrm{BS}}^{\mathrm{O}}\right),\label{eq:CP at BS OFDMA proof}
\end{align}
where (a) follows due to $\left\Vert h_{\mathrm{BS}_{0},\mathrm{U}_{0}}\right\Vert ^{2}\sim\exp\left(1\right)$.
$\mathcal{L}_{I_{\mathrm{C},\mathrm{C}}^{\mathrm{O}}}\left(s_{\mathrm{BS}}^{\mathrm{O}}\right)$
and $\mathcal{L}_{I_{\mathrm{C},\mathrm{D}}^{\mathrm{O}}}\left(s_{\mathrm{BS}}^{\mathrm{O}}\right)$,
respectively, denote the Laplace transforms of $I_{\mathrm{C},\mathrm{C}}^{\mathrm{O}}$
and $I_{\mathrm{C},\mathrm{D}}^{\mathrm{O}}$ evaluated at $s_{\mathrm{BS}}^{\mathrm{O}}=\frac{\tau_{\mathrm{BS}}r_{\mathrm{U},0}^{\alpha}}{P_{\mathrm{U}}}$.
We evaluate $\mathcal{L}_{I_{\mathrm{C},\mathrm{C}}^{\mathrm{O}}}\left(s_{\mathrm{BS}}^{\mathrm{O}}\right)$
as
\begin{align}
\mathcal{L}_{I_{\mathrm{C},\mathrm{C}}^{\mathrm{O}}}\left(s_{\mathrm{BS}}^{\mathrm{O}}\right) & \overset{\left(\mathrm{a}\right)}{=}\mathbb{E}\left[\underset{\tiny{\mathrm{U}_{i}\in\tilde{\Pi}_{\mathrm{UT}}^{\mathrm{O}}}}{\prod}\frac{1}{1+s_{\mathrm{BS}}^{\mathrm{O}}P_{\mathrm{U}}\left\Vert \mathrm{U}_{i}-\mathrm{BS}_{0}\right\Vert ^{-\alpha}}\right]\nonumber \\
 & \overset{\left(\mathrm{b}\right)}{=}\exp\left[-2\pi\frac{q_{\mathrm{U}}^{\mathrm{O}}\lambda_{\mathrm{UT}}}{K}\int_{r_{\mathrm{U},0}}^{\infty}l\left(1-\frac{1}{1+s_{\mathrm{BS}}^{\mathrm{O}}P_{\mathrm{U}}l^{-\alpha}}\right)dl\right]\nonumber \\
 & =\exp\left[-\frac{2\pi q_{\mathrm{U}}^{\mathrm{O}}\lambda_{\mathrm{UT}}s_{\mathrm{BS}}^{\mathrm{O}}P_{\mathrm{U}}}{K\left(\alpha-2\right)r_{\mathrm{U},0}^{\alpha-2}}{}_{2}F_{1}\left(1,1-\delta,2-\delta,-r_{\mathrm{U},0}^{-\alpha}s_{\mathrm{BS}}^{\mathrm{O}}P_{\mathrm{U}}\right)\right],\label{eq:Laplace of ICC OFDMA}
\end{align}
where (a) follows due to $\left\Vert h_{\mathrm{BS}_{0},\mathrm{U}_{i}}\right\Vert ^{2}\sim\exp\left(1\right)$
and (b) follows due to the probability generating functional (PGFL)
of PPP and cellular association rule that $\mathrm{U}_{0}$ connects
to the nearest BS so that interfering cellular users are at least
$r_{0}$ away. Therefore, $\mathcal{L}_{I_{\mathrm{C},\mathrm{C}}^{\mathrm{O}}}\left(\frac{\tau_{\mathrm{BS}}r_{\mathrm{U},0}^{\alpha}}{P_{\mathrm{U}}}\right)$
can be obtained by substituting $s_{\mathrm{BS}}^{\mathrm{O}}=\frac{\tau_{\mathrm{BS}}r_{\mathrm{U},0}^{\alpha}}{P_{\mathrm{U}}}$
into (\ref{eq:Laplace of ICC OFDMA}). Following similar approach,
it is easy to derive $\mathcal{L}_{I_{\mathrm{C},\mathrm{D}}^{\mathrm{O}}}\left(\frac{\tau_{\mathrm{BS}}r_{\mathrm{U},0}^{\alpha}}{P_{\mathrm{U}}}\right)=\exp\left[-\frac{2\pi^{2}\lambda_{\mathrm{DT}}\left(\frac{\tau_{\mathrm{BS}}P_{\mathrm{D}}}{P_{\mathrm{U}}}\right)^{\delta}r_{\mathrm{U},0}^{2}}{K\alpha\sin\left(\frac{2\pi}{\alpha}\right)}\right]$.

Due to the association rule of cellular users, the CDF of $r_{\mathrm{U},0}$
can be obtained according to the contact distribution \cite{book_stochastic_geometry},
i.e., $F_{r_{\mathrm{U},0}}\left(x\right)=1-\exp\left(-\pi\lambda_{\mathrm{BS}}x^{2}\right)$
$\left(x\geq0\right)$. Thus, the PDF of $r_{\mathrm{U},0}$ can be
accordingly obtained as $f_{r_{\mathrm{U},0}}\left(x\right)=2\pi\lambda_{\mathrm{BS}}x\exp\left(-\pi\lambda_{\mathrm{BS}}x^{2}\right)$
$\left(x\geq0\right)$. Combining $\mathcal{L}_{I_{\mathrm{C},\mathrm{C}}^{\mathrm{O}}}\left(\frac{\tau_{\mathrm{BS}}r_{\mathrm{U},0}^{\alpha}}{P_{\mathrm{U}}}\right)$
and $\mathcal{L}_{I_{\mathrm{C},\mathrm{D}}^{\mathrm{O}}}\left(\frac{\tau_{\mathrm{BS}}r_{\mathrm{U},0}^{\alpha}}{P_{\mathrm{U}}}\right)$
and taking the expectation of $r_{\mathrm{U},0}$, we derive the coverage
probability at the typical cellular BS. Using similar approach, we
have
\begin{equation}
\mathrm{CP}_{\mathrm{DR}}^{\mathrm{OU}}=\exp\left[-\frac{2\pi^{2}\tau_{\mathrm{DR}}^{\delta}r_{\mathrm{D},0}^{2}}{K\alpha\sin\left(\frac{2\pi}{\alpha}\right)}\left(\lambda_{\mathrm{DT}}+q_{\mathrm{U}}^{\mathrm{O}}\lambda_{\mathrm{UT}}\left(\frac{P_{\mathrm{U}}}{P_{\mathrm{D}}}\right)^{\delta}\right)\right].\label{eq:CP at D2D OFDMA proof}
\end{equation}
By taking the expectation of $r_{\mathrm{D},0}$ in (\ref{eq:CP at D2D OFDMA proof})
according to the PDF given in (\ref{eq:D2D link length distribution}),
we derive the coverage probability at the typical D2D receiver. Note
that $r_{\mathrm{D},0}\in\left[0,\tau_{\mathrm{dis}}\right]$, since
mode selection is applied by D2D transmitters.

\subsection{Proof for Proposition \ref{proposition: SCMA ASE underlaid}\label{sub:Proof for Prop cellular CP SCMA}}

We denote $\underset{m=1}{\overset{N_{\mathrm{C}}}{\sum}}\left\Vert h_{\mathrm{BS}_{0},\mathrm{U}_{0},m}\right\Vert ^{2}$
by $G_{\mathrm{BS}_{0},\mathrm{U}_{0}}$. Since $\left\Vert h_{\mathrm{BS}_{0},\mathrm{U}_{0},m}\right\Vert ^{2}\sim\exp\left(1\right)$,
$G_{\mathrm{BS}_{0},\mathrm{U}_{0}}$ follows Gamma distribution,
i.e., $G_{\mathrm{BS}_{0},\mathrm{U}_{0}}\sim\mathrm{Gamma}\left(N_{\mathrm{C}},1\right)$,
which makes it difficult to derive the explicit-form result of $\mathrm{CP}_{\mathrm{BS}}^{\mathrm{SU}}$.
In order to provide analytical tractability, we use a random variable
$W_{\mathrm{BS}_{0},\mathrm{U}_{0}}$ with mean $N_{\mathrm{C}}$,
i.e., $W_{\mathrm{BS}_{0},\mathrm{U}_{0}}\sim\exp\left(N_{\mathrm{C}}^{-1}\right)$
to approximate $G_{\mathrm{BS}_{0},\mathrm{U}_{0}}$. Note that $G_{\mathrm{BS}_{0},\mathrm{U}_{0}}$
and $W_{\mathrm{BS}_{0},\mathrm{U}_{0}}$ have the same first moment.
Therefore, we have the approximation
\begin{align}
\mathrm{CP}_{\mathrm{BS}}^{\mathrm{SU}} & \approx\mathbb{P}\left\{ \frac{P_{\mathrm{U}}^{\dagger}r_{\mathrm{U},0}^{-\alpha}W_{\mathrm{BS}_{0},\mathrm{U}_{0}}}{I_{\mathrm{C},\mathrm{C}}^{\mathrm{S}}+I_{\mathrm{C},\mathrm{D}}^{\mathrm{S}}}>\tau_{\mathrm{BS}}\right\} =\mathcal{L}_{I_{\mathrm{C},\mathrm{C}}^{\mathrm{S}}}\left(s_{\mathrm{BS}}^{\mathrm{S}}\right)\mathcal{L}_{I_{\mathrm{C},\mathrm{D}}^{\mathrm{S}}}\left(s_{\mathrm{BS}}^{\mathrm{S}}\right),\label{eq:CP at BS SCMA proof}
\end{align}
where $\mathcal{L}_{I_{\mathrm{C},\mathrm{C}}^{\mathrm{S}}}\left(s_{\mathrm{BS}}^{\mathrm{S}}\right)$
and $\mathcal{L}_{I_{\mathrm{C},\mathrm{D}}^{\mathrm{S}}}\left(s_{\mathrm{BS}}^{\mathrm{S}}\right)$,
respectively, denote the Laplace transforms of $I_{\mathrm{C},\mathrm{C}}^{\mathrm{S}}$
and $I_{\mathrm{C},\mathrm{D}}^{\mathrm{S}}$ evaluated at $s_{\mathrm{BS}}^{\mathrm{S}}=\frac{\tau_{\mathrm{BS}}r_{\mathrm{U},0}^{\alpha}}{N_{\mathrm{C}}P_{\mathrm{U}}^{\dagger}}$.
We denote $\overset{N_{\mathrm{C}}}{\underset{m=1}{\sum}}\left\Vert h_{\mathrm{BS}_{0},\mathrm{U}_{i},m}\right\Vert ^{2}$
as $G_{\mathrm{BS}_{0},\mathrm{U}_{i}}$ and calculate $\mathcal{L}_{I_{\mathrm{C},\mathrm{C}}^{\mathrm{S}}}\left(s_{\mathrm{BS}}^{\mathrm{S}}\right)$
as
\begin{align}
\mathcal{L}_{I_{\mathrm{C},\mathrm{C}}^{\mathrm{S}}}\left(s_{\mathrm{BS}}^{\mathrm{S}}\right) & =\mathbb{E}\left[\exp\left(-\underset{\tiny{\mathrm{U}_{i}\in\tilde{\Pi}_{\mathrm{UT}}^{\mathrm{S}}}}{\sum}s_{\mathrm{BS}}^{\mathrm{S}}P_{\mathrm{U}}^{\dagger}G_{\mathrm{BS}_{0},\mathrm{U}_{i}}\left\Vert \mathrm{U}_{i}-\mathrm{BS}_{0}\right\Vert ^{-\alpha}\right)\right]\nonumber \\
 & \overset{\left(\mathrm{a}\right)}{=}\mathbb{E}\left[\underset{\tiny{\mathrm{U}_{i}\in\tilde{\Pi}_{\mathrm{UT}}^{\mathrm{S}}}}{\prod}\frac{1}{\left(1+s_{\mathrm{BS}}^{\mathrm{S}}P_{\mathrm{U}}^{\dagger}\left\Vert \mathrm{U}_{i}-\mathrm{BS}_{0}\right\Vert ^{-\alpha}\right)^{N_{\mathrm{C}}}}\right]\nonumber \\
 & =\exp\left[-2\pi\frac{q_{\mathrm{U}}^{\mathrm{S}}\lambda_{\mathrm{UT}}}{J}\int_{r_{\mathrm{U},0}}^{\infty}l\left(1-\frac{1}{\left(1+s_{\mathrm{BS}}^{\mathrm{S}}P_{\mathrm{U}}^{\dagger}l^{-\alpha}\right)^{N_{\mathrm{C}}}}\right)dl\right]\nonumber \\
 & =\exp\left[-\frac{\pi q_{\mathrm{U}}^{\mathrm{S}}\lambda_{\mathrm{UT}}}{r_{\mathrm{U},0}^{-2}J}\left(_{2}F_{1}\left(N_{\mathrm{C}},-\delta,1-\delta,-r_{\mathrm{U},0}^{-\alpha}s_{\mathrm{BS}}^{\mathrm{S}}P_{\mathrm{U}}^{\dagger}\right)-1\right)\right],\label{eq:Laplace of ICC SCMA}
\end{align}
where (a) is due to the PGFL of PPP. Likewise, we calculate $\mathcal{L}_{I_{\mathrm{C},\mathrm{D}}^{\mathrm{S}}}\left(s_{\mathrm{BS}}^{\mathrm{S}}\right)$
as 
\begin{align}
\mathcal{L}_{I_{\mathrm{C},\mathrm{D}}^{\mathrm{S}}}\left(s_{\mathrm{BS}}^{\mathrm{S}}\right) & =\mathbb{E}\left[\exp\left(-\underset{\tiny{\mathrm{DT}_{j}\in\Pi_{\mathrm{DT}}^{\mathrm{S}}}}{\sum}s_{\mathrm{BS}}^{\mathrm{S}}P_{\mathrm{D}}^{\dagger}G_{\mathrm{BS}_{0},\mathrm{DT}_{i}}\left\Vert \mathrm{DT}_{j}-\mathrm{BS}_{0}\right\Vert ^{-\alpha}\right)\right]\nonumber \\
 & =\mathbb{E}\left[\underset{\tiny{\mathrm{DT}_{j}\in\Pi_{\mathrm{DT}}^{\mathrm{S}}}}{\prod}\frac{1}{\left(1+s_{\mathrm{BS}}^{\mathrm{S}}P_{\mathrm{D}}^{\dagger}\left\Vert \mathrm{DT}_{i}-\mathrm{BS}_{0}\right\Vert ^{-\alpha}\right)^{N_{\mathrm{C}}}}\right]\nonumber \\
 & \overset{\left(\mathrm{a}\right)}{=}\exp\left[-2\pi\frac{\lambda_{\mathrm{DT}}}{J}\int_{0}^{\infty}l\left(1-\frac{1}{\left(1+s_{\mathrm{BS}}^{\mathrm{S}}P_{\mathrm{D}}^{\dagger}l^{-\alpha}\right)^{N_{\mathrm{C}}}}\right)dl\right]\nonumber \\
 & =\exp\left[-\frac{2\pi^{2}\lambda_{\mathrm{DT}}\left(s_{\mathrm{BS}}^{\mathrm{S}}P_{\mathrm{D}}^{\dagger}\right)^{\delta}}{J\alpha\sin\left(\frac{2\pi}{\alpha}\right)}\underset{_{n=2}}{\overset{_{N_{\mathrm{C}}}}{\prod}}\left(\frac{2}{\left(n-1\right)\alpha}+1\right)\right],\label{eq:Laplace of ICD SCMA}
\end{align}
where (a) follows because all D2D transmitters are active.

Substituting $s_{\mathrm{BS}}^{\mathrm{S}}=\frac{\tau_{\mathrm{BS}}r_{\mathrm{U},0}^{\alpha}}{N_{\mathrm{C}}P_{\mathrm{U}}^{\dagger}}$
into (\ref{eq:Laplace of ICC SCMA}) and (\ref{eq:Laplace of ICD SCMA}),
we have

\begin{equation}
\mathcal{L}_{I_{\mathrm{C},\mathrm{C}}^{\mathrm{S}}}\left(\frac{\tau_{\mathrm{BS}}r_{\mathrm{U},0}^{\alpha}}{N_{\mathrm{C}}P_{\mathrm{U}}^{\dagger}}\right)=\exp\left[-\pi\frac{q_{\mathrm{U}}^{\mathrm{S}}\lambda_{\mathrm{UT}}}{J}r_{\mathrm{U},0}^{2}\left(_{2}F_{1}\left(N_{\mathrm{C}},-\delta,1-\delta,-\frac{\tau_{\mathrm{BS}}}{N_{\mathrm{C}}}\right)-1\right)\right].\label{eq:Laplace of ICC SCMA ex}
\end{equation}
\begin{equation}
\mathcal{L}_{I_{\mathrm{C},\mathrm{D}}^{\mathrm{S}}}\left(\frac{\tau_{\mathrm{BS}}r_{\mathrm{U},0}^{\alpha}}{N_{\mathrm{C}}P_{\mathrm{U}}^{\dagger}}\right)=\exp\left[-\frac{2\pi^{2}\lambda_{\mathrm{DT}}\left(\frac{\tau_{\mathrm{BS}}P_{\mathrm{D}}^{\dagger}}{N_{\mathrm{C}}P_{\mathrm{U}}^{\dagger}}\right)^{\delta}r_{\mathrm{U},0}^{2}}{J\alpha\sin\left(\frac{2\pi}{\alpha}\right)}\underset{_{n=2}}{\overset{_{N_{\mathrm{C}}}}{\prod}}\left(\frac{2}{\left(n-1\right)\alpha}+1\right)\right].\label{eq:Laplace of ICD SCMA ex}
\end{equation}

The PDF of $r_{\mathrm{U},0}$ can be obtained in Appendix \ref{sub:Proof for Prop cellular CP OFDMA}.
Therefore, we derive the approximate coverage probability at the typical
BS by combining the results of (\ref{eq:Laplace of ICC SCMA ex})
and (\ref{eq:Laplace of ICD SCMA ex}) into (\ref{eq:CP at BS SCMA proof})
and taking the expectation of $r_{\mathrm{U},0}$.

Using similar approach, we approximate the coverage probability at
the typical D2D receiver and then complete the proof. The detail of
derivation steps is omitted due to space limitation.

\bibliographystyle{IEEEtran}
\bibliography{ref}

\end{document}